\documentclass[10pt, onecolumn, draftcls]{IEEEtran}

\usepackage{amsmath}
\usepackage{amsfonts}
\usepackage{amssymb}
\usepackage{graphicx}
\usepackage{epsfig}
\usepackage{mathrsfs}
\usepackage{cite}
\usepackage{psfig}
\usepackage{epsf}
\usepackage{flushend}
\usepackage[active]{srcltx}

\newtheorem{theorem}{Theorem}

\newtheorem{lemma}[theorem]{Lemma}
\newtheorem{corol}[theorem]{Corollary}

\begin{document}

\title{Rate-Constrained Wireless Networks with Fading Channels: Interference-Limited and Noise-Limited Regimes}
\author{Masoud Ebrahimi and Amir K. Khandani
\thanks{This work is financially supported by Nortel Networks and by matching funds from
the federal government of Canada (NSERC) and province of Ontario
(OCE).}
\thanks{The authors are affiliated with the Coding and Signal Transmission Laboratory,
Electrical and Computer Engineering Department, University of
Waterloo, Waterloo, ON, N2L 3G1, Canada, Tel: 519-885-1211 ext.
35324, \mbox{Fax: 519-888-4338}, Emails:~\{masoud,
khandani\}@cst.uwaterloo.ca.}
\thanks{This material was presented in part at the 10th Canadian Workshop on
Information Theory (CWIT), Edmonton, AB, Canada, June 2007.}
}\maketitle \markboth{Submitted to IEEE Transactions on Information
Theory}{}

\begin{abstract}
A network of $n$ wireless communication links is considered in a
Rayleigh fading environment. It is assumed that each link can be
active and transmit with a constant power $P$ or remain silent. The
objective is to maximize the number of active links such that each
active link can transmit with a constant rate $\lambda$. An upper
bound is derived that shows the number of active links scales at
most like $\frac{1}{\lambda} \log n$. To obtain a lower bound, a
decentralized link activation strategy is described and analyzed. It
is shown that for small values of $\lambda$, the number of supported
links by this strategy meets the upper bound; however, as $\lambda$
grows, this number becomes far below the upper bound. To shrink the
gap between the upper bound and the achievability result, a modified
link activation strategy is proposed and analyzed based on some
results from random graph theory. It is shown that this modified
strategy performs very close to the optimum. Specifically, this
strategy is \emph{asymptotically almost surely} optimum when
$\lambda$ approaches $\infty$ or $0$. It turns out the optimality
results are obtained in an interference-limited regime. It is
demonstrated that, by proper selection of the algorithm parameters,
the proposed scheme also allows the network to operate in a
noise-limited regime in which the transmission rates can be adjusted
by the transmission powers. The price for this flexibility is a
decrease in the throughput scaling law by a multiplicative factor of
$\log \log n$.
\end{abstract}

\begin{keywords}
Wireless networks, fading channel, throughput, scaling law, random
graph.
\end{keywords}

\section{Introduction}

Wireless networks consist of a number of nodes communicating over a
shared wireless channel. The design and analysis of such
configurations, even in their simplest forms, are among the most
difficult problems in information theory. However, as the number of
nodes becomes large, wireless networks become more tractable, where
scaling laws for network parameters, such as throughput, can be
derived.

Most of the works dealing with the throughput of large wireless
networks consider a channel model in which the signal power decays
according to a distance-based attenuation law
\cite{Gupta00,Franceschetti07j,Xie04,Gastpar05,Leveque05,Dousse06,Grossglauser02,Kulkarni04}.
However, in a wireless environment, the presence of obstacles and
scatterers adds some randomness to the received signal. This random
behavior of the channel, known as fading, can drastically change the
scaling laws of a network in both multihop
\cite{Gowaikar05,Gowaikar06c,Toumpis04,Xue05} and single-hop
scenarios \cite[Chapter
8]{Etkinthesis},\cite{Weber06c,Weber06,Gesbert07rawnet,Gesbert07,Jindal07}.
In this paper, we follow the model of \cite{Gowaikar05, Etkinthesis,
Ebrahimi07}, where fading is assumed to be the dominant factor
affecting the strength of the channels between nodes.

In this work, we consider a single-hop network, i.e., a network in
which data is transmitted directly from sources to their
corresponding receivers without utilizing any other nodes as
routers. Each communication link can be active and transmit with a
constant power $P$ or remain silent. Throughput and rate-per-link
are the network parameters which are of concern to us. Despite the
randomness of the channel, we are only interested in events that
occur with high probability, i.e., with probability tending to one
as $n \to \infty$. This deterministic approach to random wireless
networks has been also deployed in
\cite{Ebrahimi07,Kulkarni04,Franceschetti07j}.

In a previous work by the authors \cite{Ebrahimi07}, the throughput
maximization of a single-hop wireless network in a Rayleigh fading
environment has been investigated without any rate constraints. It
is shown that the maximum throughput scales like $\log n$. Also, a
decentralized link activation strategy, called the threshold-based
link activation strategy (TBLAS), is proposed that achieves this
scaling law. The throughput maximization using TBLAS yields an
average rate per active link that approaches zero as $n \to \infty$.
The same phenomenon has been observed in
\cite{Gowaikar05,Gupta00,Toumpis04,Franceschetti07j}. Since most of
the existing efficient channel codes are designed for moderate
rates, it is a drawback for a system to have zero-approaching rates.
Thus, from a practical point of view, it is appealing to assign
constant rates to active communication links. In
\cite{Grossglauser02}, it is shown that a nondecreasing
rate-per-node is achievable when nodes are mobile.

In this paper, we consider the problem of rate-constrained
throughput maximization in a Rayleigh fading environment. More
specifically, the objective is to maximize the number of active
links such that each active link can transmit with a constant rate
$\lambda$. We derive an upper bound that shows the number of active
links scales at most like $\frac{1}{\lambda} \log n$. To obtain a
lower bound, first, we examine the simple TBLAS of \cite{Ebrahimi07}
and show that it is capable of guaranteeing rate-per-links equal to
$\lambda$. The number of active links provided by this method scales
like $\Theta(\log n)$. The scaling factor is close to the optimum
when $\lambda$ is small. However, as $\lambda$ grows large, the
scaling factor decays exponentially with $\lambda$, making it far
below the upper bound $\frac{1}{\lambda}$. This inspires developing
an improved link activation strategy that works well for large
values of desired rates, as well. To this end, we propose a
double-threshold-based link activation strategy (DTBLAS).

DTBLAS is attained by adding an interference management phase to
TBLAS. This is done by choosing from good enough links only those
with small enough mutual interference. The analysis of DTBLAS is
more complicated than that of TBLAS. However, it can be carried out
using some results from the random graph theory. It is shown that
DTBLAS performs very close to the optimum. Indeed, its performance
reaches the upper bound when the demanded rate approaches $\infty$
or $0$. This shows the asymptotic optimality of DTBLAS for the
rate-constrained throughput maximization problem.

In all scenarios described above, the interference power is much
larger than the noise power and the rates become independent of
\emph{signal-to-noise ratio} (\emph{SNR}). In other words, the
network performs in an interference-limited regime. A natural
question is whether it is possible to have rate-per-links which
depend on the \emph{SNR}. The importance of this scenario, which is
called the noise-limited regime, is that the transmission rate
$\lambda$ can be adjusted by adjusting the transmission power $P$.
We show that the answer to the above question is affirmative and the
noise-limited regime can be realized by using DTBLAS. However, the
throughput achieved by this method scales like $\frac{\log n}{\log
\log n}$, which is by a multiplicative factor of ${\log \log n}$
less than what is achievable in an interference-limited regime.

It is worth mentioning that link activation strategies studied in
this paper can be considered as special power allocation schemes.
The problem of throughput maximization via power allocation is a
challenging problem for which only suboptimum solutions have been
reported \cite{Julian02, Boche04,Jindal07c}. However, variations of
this problem have been extensively studied in the literature, where
the on-off scheme has frequently appeared. Recently, for a
decentralized utility-based network\footnote{Each node maximizes a
locally computed network average throughput conditioned on its own
channel gain.}, it is shown that the optimum power allocation
follows an on-off paradigm when the number of links is large
\cite{Jamshid07}. The on-off power allocation has been also used in
\cite{Gesbert07rawnet,Gesbert07} for a cellular network in which the
number of cell (links) are limited, but in each cell there are
infinite number of users to choose from. For cellular systems, a
distributed joint power allocation and scheduling has been proposed
in \cite{Kianiwcnc07}, in which again an on-off strategy is
followed.

The rest of the paper is organized as follows: In Section
\ref{system}, network model and problem formulation are presented.
An upper bound on the throughput is derived in
Section~\ref{upperbound}. In
Sections~\ref{LBDecentralized}~and~\ref{LBCentralized},
achievability results via decentralized and centralized schemes are
presented. Some optimality results are provided in
Section~\ref{optimality}. The operation of the network in a
noise-limited regime is investigated in Section \ref{noise-limited}.
Finally, the paper is concluded in Section \ref{conclusion}.

\emph{Notation:} $\mathcal{N}_n$ represents the set of natural
numbers less than or equal to $n$; $\log ( \cdot )$ is the natural
logarithm function; $\lfloor x \rfloor$ denotes the largest integer
less than or equal to $x$; $\chi^2(M)$ represents the chi-squared
distribution with $M$ degrees of freedom; $\textrm{P} (A)$ denotes
the probability of event $A$; $\textrm{E}(x)$ and $\textrm{Var}(x)$
represent the expected value and the variance of the random variable
$x$, respectively; $\approx$ means approximate equality; for any
functions $f(n)$ and $h(n)$, \mbox{$h(n)=O(f(n))$} is equivalent to
$ \lim_{n \to \infty} \left | {h(n)}/ {f(n)} \right |< \infty$,
\mbox{$h(n)=o(f(n))$} is equivalent to $ \lim_{n \to \infty} \left |
{h(n)}/{f(n)} \right | =0$, \mbox{$h(n)=\omega(f(n))$} is equivalent
to $ \lim_{n \to \infty} \left | {h(n)}/{f(n)} \right | = \infty$,
\mbox{$h(n)=\Theta(f(n))$} is equivalent to $ \lim_{n \to \infty}
\left | {h(n)}/{f(n)} \right | = c$, where $0<c<\infty$, and
\mbox{$h(n)\sim f(n)$} is equivalent to $ \lim_{n \to \infty}
{h(n)}/{f(n)}  = 1$; an event $A_n$ holds \emph{asymptotically
almost surely} (a.a.s) if $\textrm{P} (A_n) \to 1$ as $ n \to
\infty$.

\section{Network Model and Problem Formulation}\label{system}

The network model is the same as in \cite{Ebrahimi07}; however, we
repeat it here for completeness. We consider a wireless
communication network with $n$ pairs of transmitters and receivers.
These $n$ communication links are indexed by the elements of
$\mathcal{N}_n$. Each transmitter aims to send data to its
corresponding receiver in a single-hop fashion. The transmit power
of link $i$ is denoted by $p_i$. It is assumed that the links follow
an on-off paradigm, i.e., $p_i \in \{ 0,\,P \} $, where $P$ is a
constant. Hence, any power allocation scheme translates to a
\emph{link activation strategy (LAS)}. Any LAS yields a set of
active links $\mathcal{A}$, which describes the transmission powers
as
\begin{equation}
p_i=\left \{
\begin{tabular}{lcl}
$P$ & if & $i \in \mathcal{A} $\\
$0$ & if & $i \notin \mathcal{A}$
\end{tabular}.
\right .
\end{equation}

The channel between transmitter $j$ and receiver $i$ is
characterized by the coefficient $g_{ji}$. This means the received
power from transmitter $j$ at the receiver $i$ equals $g_{ji}p_j$.
We assume that the channel coefficients are \emph{independent
identically distributed} (i.i.d.) random variables drawn from an
exponential pdf, i.e., $f(x)=e^{-x}$, with mean $\mu=1$ and variance
$\sigma^2=1$. This channel model corresponds to a Rayleigh fading
environment. We refer to the coefficients $g_{ii}$ and $g_{ji}$ ($j
\neq i$) as \emph{direct channel coefficients} and \emph{cross
channel coefficients}, respectively.

We consider an additive white Gaussian noise (AWGN) with limited
variance $\eta$ at the receivers. The transmit \emph{SNR} of the
network is defined as
\begin{equation}
\rho=\dfrac{P}{\eta}.
\end{equation}
The receivers are conventional linear receivers, i.e., without
multiuser detection. Since the transmissions occur simultaneously
within the same environment, the signal from each transmitter acts
as interference for other links. Assuming Gaussian signal
transmission from all links, the distribution of the interference
will be Gaussian as well. Thus, according to the Shannon capacity
formula \cite{cover}, the maximum supportable rate of link $i \in
\mathcal{A}$ is obtained as
\begin{equation}\label{ratedef}
r_i (\mathcal{A})=\log \left( 1+ \gamma_i(\mathcal{A}) \right) \;
\textrm{nats/channel use},
\end{equation}
where
\begin{equation}\label{SINRdef}
\gamma_i(\mathcal{A})=\dfrac{g_{ii}}{1/\rho+\sum_{\substack{j \in \mathcal{A}\\
j \neq i}} g_{ji}}
\end{equation}
is the \emph{signal-to-interference-plus-noise ratio} (\emph{SINR})
of link $i$.

As a measure of performance, in this paper we consider the
throughput of the network, which is defined as
\begin{equation}\label{throughputdef}
T(\mathcal{A})=\sum_{i \in \mathcal{A}} r_i(\mathcal{A}).
\end{equation}
Also, the \emph{average rate per active link} is defined as
\begin{equation}\label{averagerate}
\bar{r}(\mathcal{A})=\dfrac{T(\mathcal{A})}{|\mathcal{A}|}.
\end{equation}
In this paper, wherever there is no ambiguity, we drop the
functionality of $\mathcal{A}$ from the network parameters and
simply refer to them as $r_i$, $\gamma_i$, $T$, or $\bar{r}$.

Throughout the paper, we assume all active links transmit with a
same constant rate $\lambda$. In this case, the throughput becomes
proportional to the number of active links, i.e.,
$T(\mathcal{A})=|\mathcal{A}|\lambda$. Hence, the problem of
throughput maximization becomes equivalent to maximizing the number
of active links subject to a constraint on the rate of active links,
i.e.,
\begin{equation}\label{constrainedproblem}
\begin{tabular}{cc}
$\displaystyle \max_{\mathcal{A} \subseteq \mathcal{N}_n}$ & $ |\mathcal{A}|$\\
$\textrm{s.t.}$ & $r_i(\mathcal{A}) \geq \lambda, \quad \forall i
\in \mathcal{A}$
\end{tabular}.
\end{equation}
This problem is referred to as the \emph{rate-constrained throughput
maximization}. We denote the throughput corresponding to the maximum
value of this problem by $T_c^*$.

Due to the nonconvex and integral nature of the throughput
maximization problem, its solution is computationally intensive.
However, in this paper we propose and analyze LASs which lead to
efficient solutions for the above problem. Indeed, we first show
that the decentralized method of \cite{Ebrahimi07} is a.a.s. optimum
when $\lambda$ is vanishingly small. Then, we propose a new LAS
which is asymptotically optimum for large values as well as small
values of $\lambda$. Also, for moderate values of $\lambda$, there
is a small gap between the performance of the proposed LAS and a
derived upper bound. This shows the closeness of its performance to
the optimum.

For simplicity of notation, we denote the number of active links by
$k$ instead of $ |\mathcal{A}|$. Motivated by the result of
\cite{Ebrahimi07} that shows the maximum throughput scales like
$\log n$, we introduce the following definitions. The scaling factor
of the throughput is defined as
\begin{equation}\label{throughputscalingfactor}
\tau=\lim_{n \to \infty} \dfrac{T}{\log n},
\end{equation}
Similarly, the scaling factor of the number of active links is defined as 
\begin{equation}\label{nousersscalingfactor}
\kappa=\lim_{n \to \infty} \dfrac{k}{\log n}.
\end{equation}

\section{Upper Bound}\label{upperbound}

In this section, we obtain an upper bound on the optimum solution of
(\ref{constrainedproblem}). This upper bound can be either presented
as an upper bound on the throughput or as an upper bound on the
number of active links.

\begin{theorem}\label{throughputupperbound}
Assume $\mathcal{A}_c^*$ is the solution to the rate-constrained
throughput maximization \eqref{constrainedproblem} and
$k_c^*=|\mathcal{A}_c^*|$. Then, the associated throughput and the
scaling factor of $k_c^*$ a.a.s. satisfy
\begin{eqnarray}
T_c^* & < & \log n - \log \log n + c,\\
\kappa_c^* & < & \dfrac{1}{\lambda},
\end{eqnarray}
for some constant $c$.
\end{theorem}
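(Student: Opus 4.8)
The plan is to convert the rate constraint into an \emph{SINR} constraint and then run a first-moment (union-bound) argument over all candidate active sets. Since $r_i(\mathcal{A}) \geq \lambda$ is equivalent to $\gamma_i(\mathcal{A}) \geq \beta$ with $\beta := e^{\lambda}-1$, a set $S \subseteq \mathcal{N}_n$ of size $k$ is \emph{feasible} precisely when $g_{ii} \geq \beta\left(1/\rho + \sum_{j \in S,\, j \neq i} g_{ji}\right)$ holds for every $i \in S$. Two structural observations drive the argument. First, the constraint attached to link $i$ involves only coefficients carrying the receiver index $i$, namely $g_{ii}$ and $\{g_{ji} : j \in S,\, j \neq i\}$, and these families are disjoint across distinct $i$; hence, by the i.i.d. assumption, the $k$ per-link constraints are mutually independent. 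Second, deleting a link from a feasible set only lowers the interference seen by the survivors, so every subset of a feasible set is feasible; consequently the event $\{k_c^* \geq k_0\}$ coincides with the event that some set of size \emph{exactly} $k_0$ is feasible.

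Next I would compute the per-link success probability. Fixing $S$ with $|S|=k$ and isolating one link, the aggregate interference $G := \sum_{j \in S,\, j \neq i} g_{ji}$ is a sum of $k-1$ i.i.d. unit-mean exponentials, i.e.\ $G \sim \mathrm{Gamma}(k-1,1)$, independent of the exponential self-gain $g_{ii}$. Conditioning on $G$ and using $\textrm{P}(g_{ii} \geq t) = e^{-t}$,
\begin{equation}
q_k := \textrm{P}\!\left(g_{ii} \geq \beta G + \beta/\rho\right) = e^{-\beta/\rho}\, \textrm{E}\!\left[e^{-\beta G}\right] = e^{-\beta/\rho}\,(1+\beta)^{-(k-1)} \leq e^{-\lambda(k-1)},
\end{equation}
where the last equality uses the moment generating function of the Gamma law together with $1+\beta = e^{\lambda}$, and the bound discards the factor $e^{-\beta/\rho} \leq 1$, so the estimate is valid for every $\rho$.

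By independence and the union bound, $\textrm{P}(k_c^* \geq k_0) \leq \binom{n}{k_0}\, q_{k_0}^{\,k_0} \leq \left(\tfrac{en}{k_0}\right)^{k_0} e^{-\lambda k_0 (k_0-1)}$. Its logarithm equals $k_0\big[\,1 + \log n - \log k_0 - \lambda(k_0-1)\,\big]$. I would then set $\lambda k_0 = \log n - \log\log n + c$, so that $k_0 \sim \tfrac{1}{\lambda}\log n$ and $\log k_0 = \log\log n + O(1)$; substituting collapses the bracket to the constant $1 + \lambda + \log\lambda - c + o(1)$. Choosing $c$ strictly larger than $1 + \lambda + \log\lambda$ makes the bracket negative, so the exponent tends to $-\infty$ (since $k_0 \to \infty$) and $\textrm{P}(k_c^* \geq k_0) \to 0$. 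Thus a.a.s.\ $k_c^* < k_0$, which yields $T_c^* = \lambda k_c^* < \log n - \log\log n + c$ and, after dividing by $\log n$ and letting $n \to \infty$, the bound $\kappa_c^* < 1/\lambda$. It is worth noting that the $-\log\log n$ correction is produced exactly by the $-\log k_0 \approx -\log\log n$ term arising from the binomial coefficient.

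The main obstacle is not the computation but securing the two structural reductions cleanly: the independence of the per-link constraints (which hinges on careful bookkeeping of the receiver index so that the relevant coefficient families are genuinely disjoint) and the subset-closure property that reduces $\{k_c^* \geq k_0\}$ to a single cardinality. Once these are established, the evaluation of $q_k$ and the first-moment bound are routine. The subset-closure step also disposes of a secondary concern, namely whether one must sum the union bound over all $k \geq k_0$: because feasibility is inherited by subsets, testing the single value $k = k_0$ suffices.
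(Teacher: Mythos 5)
Your proof is correct and follows essentially the same route as the paper's: a union bound over all $\binom{n}{k}$ candidate sets, the per-link feasibility probability $e^{-\gamma_0/\rho}(1+\gamma_0)^{-(k-1)}$ computed from the Gamma-distributed interference, Stirling's bound on the binomial coefficient, and the choice $k\lambda = \log n - \log\log n + c$. Your explicit justification of the independence of the per-link constraints (disjoint coefficient families indexed by the receiver) and of the subset-closure property (which reduces the event $\{k_c^* \geq k_0\}$ to a single cardinality) makes rigorous two steps the paper leaves implicit.
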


\begin{proof}
For a randomly selected set of active links $\mathcal{A}$ with
$|\mathcal{A}|=k$, the interference term $I_i= \sum_{\substack{j \in \mathcal{A}\\
j \neq i}} g_{ji}$ in the denominator of (\ref{SINRdef}) has
$\chi^2(2k-2)$ distribution. Hence, we have
\begin{eqnarray}
\nonumber \textrm{P}(\gamma_i > x) & = & \int_0^\infty \textrm{P}\left(\gamma_i > x | I_i=z \right) f_{I_i}(z) dz \\
\nonumber & = & \int_0^\infty e^{-x (1/\rho+z)} \dfrac{z^{k-2}e^{-z}}{(k-2)!}dz \\
& = & \dfrac{e^{-x/\rho}}{(1+x)^{k-1}}.\label{ccdfSINR}
\end{eqnarray}

Assume $\mathcal{L}_1$ is the event that there exists at least one
set $\mathcal{A} \subseteq \mathcal{N}_n$ with $|\mathcal{A}|=k$
such that the constraints in (\ref{constrainedproblem}) are
satisfied. Also, assume $\gamma_0$ is a quantity that satisfies
$\lambda= \log ( 1 + \gamma_0 )$. We have
\begin{eqnarray}
\textrm{P} (\mathcal{L}_1) & \stackrel{(a)}{\leq} & \binom{n}{k}
\left( \textrm{P} (r_i \geq
\lambda) \right ) ^k \\
& = & \binom{n}{k} \left( \textrm{P} (\gamma_i \geq \gamma_0)
\right ) ^k\\
& \stackrel{(\ref{ccdfSINR})}{=} & \binom{n}{k} \dfrac{e^{-
\gamma_0k/\rho}}{(1+\gamma_0)^{k(k-1)}}\\
& \stackrel{(b)}{\leq} & \left( \dfrac{ne}{k} \right) ^ k
\dfrac{e^{-
\gamma_0k/\rho}}{(1+\gamma_0)^{k(k-1)}}\\
& = & e ^ {k(\log n - \log k - \lambda k + \lambda +1 -
\gamma_0/\rho )},
\end{eqnarray}
where (a) is due to the union bound and (b) is the result of
applying the Stirling's approximation for the factorial. It can be
verified that there exists a constant $c$ such that if $k\lambda =
\log n - \log \log n + c $, then, the above upper bound approaches
zero for $n \to \infty$. Hence, for the event $\mathcal{L}_1$ to
have non-zero probability, we should a.a.s. have
\begin{equation}
k\lambda < \log n - \log \log n + c.
\end{equation}
This inequality holds for any feasible number of active links. By
choosing $k=k_c^*$, the upper bounds in the lemma are proved.
\end{proof}

\section{Lower Bound: A Decentralized Approach}\label{LBDecentralized}

To derive a lower bound, in this section, we consider the
threshold-based link activation strategy (TBLAS) originally proposed
in \cite{Ebrahimi07}.

\emph{TBLAS:} For a threshold $\Delta$, choose the set of active
links according to the following rule
\begin{equation}\label{strategy1}
i \in \mathcal{A} \quad \textrm{iff} \quad g_{ii} > \Delta.
\end{equation}

As it is seen, in TBLAS each link only needs to know its own direct
channel gain. If a direct channel gain is above the threshold
$\Delta$, the corresponding link is active; otherwise, it remains
silent. The value of $\Delta$ determines the achievable throughput.
We show that by proper choose of the threshold, TBLAS provides a
solution for the rate-constrained throughput maximization. The
importance of TBLAS is that it can be implemented in a decentralized
fashion.

Let us denote the achieved throughput of TBLAS by $T_{_{TBLAS}}$.
The following results are proven for TBLAS in \cite{Ebrahimi07}:
\begin{eqnarray}
\label{appsumraterayleigh} {T_{_{TBLAS}}} & \sim & n e^{-\Delta}
\log
\left( 1+ \dfrac{\Delta}{n e^{-\Delta}} \right ),\\
\label{activelinksrayleigh} k_{_{TBLAS}} & \sim & n e^{-\Delta},\\
\label{kbound} | k_{_{TBLAS}} - ne^{-\Delta} | & < & \xi \sqrt
{ne^{-\Delta}},\quad a.a.s.
\end{eqnarray}
where the last inequality holds for any $\xi=\omega(1)$.

A necessary condition for the rate of active links being equal to
$\lambda$ is $\bar{r}_{_{TBLAS}}=\lambda$, where
$\bar{r}_{_{TBLAS}}$ is the average rate per active link achieved by
TBLAS. Hence, we should choose $\Delta$ such that the throughput and
the number of active links both become proportional to $\log n$. The
following lemma shows how to realize such a scenario.
\begin{lemma}\label{strategy1constrained}
Assume the activation threshold for TBLAS is chosen to be $\Delta =
\log n - \log \log n - \log \alpha$ for some $\alpha > 0$. Then,
a.a.s. we have
\begin{eqnarray}
\tau_{_{TBLAS}} & = & \alpha \log \left(1 + \frac{1}{\alpha}\right)\\
\kappa_{_{TBLAS}} & = & \alpha \label{kscaling1}\\
\bar{r}_{_{TBLAS}} & = & \log \left(1 + \frac{1}{\alpha}\right) +
o(1).\label{rateperlink1}
\end{eqnarray}
\end{lemma}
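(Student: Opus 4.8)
The plan is to substitute the prescribed threshold $\Delta = \log n - \log \log n - \log \alpha$ directly into the three inherited facts \eqref{appsumraterayleigh}--\eqref{kbound}, since the lemma is essentially a specialization of the general TBLAS results. The first computation I would carry out is the value of the quantity $ne^{-\Delta}$, which governs all three expressions. Exponentiating the chosen $\Delta$ gives $e^{-\Delta} = \frac{1}{n} \cdot \log n \cdot \alpha$, so that $ne^{-\Delta} = \alpha \log n$. This single identity does most of the work.

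Feeding $ne^{-\Delta} = \alpha\log n$ into \eqref{activelinksrayleigh} yields $k_{_{TBLAS}} \sim \alpha\log n$ a.a.s., and dividing by $\log n$ and passing to the limit gives $\kappa_{_{TBLAS}} = \alpha$ via the definition \eqref{nousersscalingfactor}. For the throughput, I would first observe that $\Delta = \log n - \log\log n - \log\alpha \sim \log n$, since $\log\log n = o(\log n)$ and $\log\alpha$ is a constant; consequently $\frac{\Delta}{ne^{-\Delta}} = \frac{\Delta}{\alpha\log n} \to \frac{1}{\alpha}$. Substituting both facts into \eqref{appsumraterayleigh} gives $T_{_{TBLAS}} \sim \alpha\log n \cdot \log\!\left(1 + \frac{1}{\alpha}\right)$, and dividing by $\log n$ produces $\tau_{_{TBLAS}} = \alpha\log\!\left(1+\frac{1}{\alpha}\right)$ from \eqref{throughputscalingfactor}.

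The last claim, on the average rate, follows by forming $\bar{r}_{_{TBLAS}} = T_{_{TBLAS}}/k_{_{TBLAS}}$ and using the two asymptotic equivalences just established: the factor $ne^{-\Delta}$ cancels, leaving $\bar{r}_{_{TBLAS}} \sim \log\!\left(1 + \frac{\Delta}{\alpha\log n}\right)$. Here is the only place that requires a little care. Rather than simply replacing $\frac{\Delta}{\alpha\log n}$ by its limit $\frac{1}{\alpha}$, I would expand $\frac{\Delta}{\alpha\log n} = \frac{1}{\alpha} - \frac{\log\log n + \log\alpha}{\alpha\log n} = \frac{1}{\alpha} + o(1)$ and invoke the continuity (local Lipschitz property) of $\log$ near $1 + \frac{1}{\alpha} > 0$ to conclude $\bar{r}_{_{TBLAS}} = \log\!\left(1+\frac{1}{\alpha}\right) + o(1)$, which is \eqref{rateperlink1}. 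The main (and essentially only) obstacle is bookkeeping of the asymptotic notation: I must verify that the multiplicative $\sim$ relations inherited from \cite{Ebrahimi07} upgrade to the additive $o(1)$ statement, which is legitimate precisely because the limiting constants $\alpha$, $\alpha\log(1+1/\alpha)$, and $\log(1+1/\alpha)$ are strictly positive, so a ratio tending to one forces convergence to the constant itself. The residual term $\frac{\log\log n}{\log n}$ arising from $\Delta$ is what ultimately propagates into the $o(1)$ of \eqref{rateperlink1}.
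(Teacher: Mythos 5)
Your proposal is correct and follows essentially the same route as the paper's proof: compute $ne^{-\Delta}=\alpha\log n$, substitute into the inherited TBLAS asymptotics \eqref{appsumraterayleigh}--\eqref{activelinksrayleigh}, and apply the definitions of $\tau$, $\kappa$, and $\bar{r}$. The extra care you take in upgrading the multiplicative $\sim$ relations to the additive $o(1)$ statement for \eqref{rateperlink1} is a detail the paper leaves implicit, but it does not change the argument.
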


\begin{proof}
With the specified value of $\Delta$, we have $n e^{-\Delta}=\alpha
\log n$.  The values of $\tau_{_{TBLAS}}$ and $\kappa_{_{TBLAS}}$
are readily obtained by substituting this value in
(\ref{appsumraterayleigh}) and (\ref{activelinksrayleigh}) and using
the definitions \eqref{throughputscalingfactor} and
\eqref{nousersscalingfactor}, respectively. The value of
$\bar{r}_{_{TBLAS}}$ is obtained by using the definition
\eqref{averagerate}.
\end{proof}

Lemma \ref{strategy1constrained} indicates that by a proper choose
of $\alpha$, an average rate per active link equal to $\lambda$ is
achievable; however, it does not guarantee that all active links can
support this rate. In other words, one may ask whether TBLAS is
capable of satisfying the constraints in problem
(\ref{constrainedproblem}). The following lemma addresses this issue
and shows that a.a.s. the rate of all active links are highly
concentrated around the average rate per active link.

\begin{lemma}\label{concentration1}
Assume the activation threshold for TBLAS is chosen to be $\Delta =
\log n - \log \log n - \log \alpha$ for some $\alpha > 0$. Then,
a.a.s. we have
\begin{equation}
|r_i - \bar{r}| <  2\sqrt{\dfrac{\log \log n}{\alpha^3 \log n}}(1+o
( 1 )), \quad \forall i \in \mathcal{A},
\end{equation}
where $\bar{r} = \log \left( 1 + \dfrac{1}{\alpha} \right )$.
\end{lemma}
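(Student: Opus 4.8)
The plan is to track how the individual rate $r_i = \log(1+\gamma_i)$ deviates from its central value $\bar r = \log(1+1/\alpha)$ by controlling the two random ingredients of the SINR $\gamma_i = g_{ii}/(1/\rho + I_i)$, where $I_i = \sum_{j\in\mathcal A,\, j\neq i} g_{ji}$. By Lemma~\ref{strategy1constrained} and \eqref{kbound}, the chosen threshold makes the number of active links $k=|\mathcal A|$ satisfy $k\sim\alpha\log n$ a.a.s. For an active link, memorylessness of the exponential gives $g_{ii} = \Delta + \widetilde g_i$ with $\widetilde g_i$ exponential of mean one, so $g_{ii}$ sits just above $\Delta = \log n - \log\log n - \log\alpha$. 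Because activation is decided by the direct channels alone, conditioned on $\mathcal A$ the cross channels $g_{ji}$ ($j\neq i$) remain i.i.d. exponential, so $I_i$ is a sum of $k-1$ such variables, i.e. Gamma$(k-1,1)$ (equivalently $\tfrac12\chi^2(2k-2)$), with mean and variance both $k-1\sim\alpha\log n$. The idea is that the fluctuation of $I_i$ dominates every other source of error, and that the stated bound is exactly the size of this fluctuation after a union bound over the $k$ links.

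The main technical step I would carry out is a uniform concentration estimate for the interference. Using a Chernoff/Bernstein tail bound for the Gamma distribution, $\mathrm P(|I_i-(k-1)|>t)$ decays like $\exp(-t^2/(2(k-1)))$ in the relevant range $t=O(\sqrt{(k-1)\log k})$. Taking $t=\sqrt{2(1+\epsilon)(k-1)\log k}$ makes this probability at most $k^{-(1+\epsilon)}$, so a union bound over the $k\sim\alpha\log n$ active links leaves a failure probability $O(k^{-\epsilon})\to 0$. Hence a.a.s., for every $i\in\mathcal A$, $|I_i-(k-1)| \le (1+o(1))\sqrt{2(k-1)\log k}$; defining the relative deviation $\delta_I = |I_i/(k-1)-1|$, this reads $\delta_I \le (1+o(1))\sqrt{2\log\log n/(\alpha\log n)}$, since $\log k\sim\log\log n$ and $k-1\sim\alpha\log n$. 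This $\sqrt{\log\log n}$ — coming precisely from paying $\sqrt{\log k}$ standard deviations in the union bound — is what produces the $\sqrt{\log\log n}$ in the target inequality, and pinning down this step with the correct constant is the crux of the proof.

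Next I would check that all remaining sources of variability are $o\!\left(\sqrt{\log\log n/\log n}\right)$ and therefore absorbed into the $(1+o(1))$ factor. The direct-channel spread is controlled by $\max_{i\in\mathcal A}\widetilde g_i\sim\log k\sim\log\log n$, so relative to $\Delta\sim\log n$ it is only $O(\log\log n/\log n)$; the noise term contributes $1/(\rho I_i)=O(1/\log n)$, confirming the interference-limited behaviour; and the systematic offset arising from $\Delta\neq\log n$ is again $O(\log\log n/\log n)$. Each of these is of smaller order than $\sqrt{\log\log n/\log n}$, so none affects the leading constant.

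Finally I would assemble the pieces. Writing $\gamma_i = g_{ii}/(1/\rho+I_i)$ and inserting the estimates above gives $\gamma_i = \tfrac1\alpha\bigl(1+O(\delta_I)\bigr)$ with $\delta_I$ dominating, whence $|\gamma_i-\tfrac1\alpha|\le(1+o(1))\sqrt{2\log\log n/(\alpha^3\log n)}$. Transferring this to the rate through the mean-value bound $|r_i-\bar r|\le |\gamma_i-\tfrac1\alpha|/(1+\min(\gamma_i,1/\alpha)) \le \frac{\alpha}{\alpha+1}(1+o(1))\,|\gamma_i-\tfrac1\alpha|$ and using $\frac{\alpha}{\alpha+1}\sqrt2<2$ yields $|r_i-\bar r|<2\sqrt{\log\log n/(\alpha^3\log n)}\,(1+o(1))$ for all $i\in\mathcal A$, as claimed. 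The only delicate point is the uniform interference concentration of the second paragraph; the rest is bookkeeping of lower-order terms.
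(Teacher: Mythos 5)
Your proposal is correct and reaches the stated bound (in fact with the slightly better constant $\sqrt{2}$ in place of $2$). The overall architecture coincides with the paper's: reduce $|r_i-\bar r|$ to a deviation bound on $\gamma_i$, note that conditioned on $\mathcal{A}$ the interference $I_i$ is a sum of $k-1$ i.i.d.\ unit exponentials with $k\sim\alpha\log n$ by \eqref{kbound}, show via a union bound over the $k$ active links that $I_i$ concentrates within $O(\sqrt{(k-1)\log k})$ of its mean, and check that the direct-gain spread and the noise term $1/\rho$ contribute only at order $\log\log n/\log n=o\bigl(\sqrt{\log\log n/\log n}\bigr)$. Where you genuinely differ is in the tail estimate itself: the paper lower-bounds $g_{ii}$ by $\Delta$ inside the probability and then invokes Petrov's refined central limit theorem for large deviations (Theorem~\ref{clt}) to evaluate the tail at $y=2\sqrt{\log\log n}$ standard deviations, whereas you apply a Chernoff/Bernstein bound to the $\mathrm{Gamma}(k-1,1)$ law directly. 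At this moderate-deviation scale, $t=O(\sqrt{m\log m})$, the cubic correction in the exponential rate function is $t^3/m^2=o(1)$, so both tools yield the same Gaussian-type tail $\exp\left(-\frac{t^2}{2(k-1)}(1+o(1))\right)$ and your route is a legitimate, more elementary substitute. Two further points in your favour: you explicitly control $\max_{i\in\mathcal{A}} g_{ii}\le\Delta+O(\log k)$ via memorylessness, which is precisely what the paper's omitted ``other side'' ($\gamma_i<x_+$) requires, and your mean-value step sharpens the paper's concavity bound $|r_i-\bar r|\le|\gamma_i-1/\alpha|$. The only cosmetic slip is that with a fixed $\epsilon$ the union bound gives the constant $\sqrt{2(1+\epsilon)}$ rather than $\sqrt{2}\,(1+o(1))$; letting $\epsilon\to 0$ slowly (e.g.\ $\epsilon=1/\sqrt{\log k}$) repairs this, and in any case $\sqrt{2(1+\epsilon)}<2$, so the lemma's stated constant is unaffected.
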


To prove the lemma, we need the following result about the
\emph{central limit theorem} (\emph{CLT}) for large deviations.

\begin{theorem}[\cite{LimitTheorems}]\label{clt}
Let $\left\{ Y_m \right\}$ be a sequence of i.i.d. random variables.
Suppose that $Y_1$ has zero mean and finite positive variance $\nu$
and satisfies Cram\'er's condition\footnote{A random variable $Y$
satisfies the Cram\'er's condition if its moment-generating function
exists in some interval with the center at the origin.}. For
$Z_m=\frac{1}{\sqrt{ m \nu}}\sum_{j=1}^m Y_j$, define
$F_m(y)=\textrm{P}(Z_m < y)$. If $y \geq 0$, $y=O(m^{1/6})$, then
\begin{equation}\label{clteq}
1-F_m(y)=[1-\Phi(y)]\exp\left( \frac{\theta_3 y^3}{6  \sqrt{m \nu^3
}} \right) + O \left(  \frac{e^{-y^2/2}}{\sqrt{m}} \right),
\end{equation}
where $\Phi(y)$ is the cdf of normal distribution and $\theta_3
=\textrm{E}(Y_1^3)$.
\end{theorem}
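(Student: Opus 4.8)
The plan is to prove this by Cramér's method of the conjugate (exponentially tilted) distribution, refined by an Edgeworth-type expansion of the central limit theorem. Write $S_m=\sum_{j=1}^m Y_j$ and let $\psi(t)=\log\textrm{E}(e^{tY_1})$ be the cumulant generating function. Cram\'er's condition guarantees that $\psi$ is analytic in a neighbourhood of the origin, with $\psi(0)=0$, $\psi'(0)=0$, $\psi''(0)=\nu$ and $\psi'''(0)=\theta_3$ (the third cumulant equals the third moment since $Y_1$ is centred). The target is $1-F_m(y)=\textrm{P}(S_m\geq y\sqrt{m\nu})$, so the per-sample deviation is $y\sqrt{\nu/m}$, which tends to zero because $y=O(m^{1/6})$. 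We are therefore in the moderate-deviation regime, where the Gaussian tail dominates but picks up a multiplicative cubic correction.

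First I would introduce the tilted measure $d\tilde{\textrm{P}}_h=e^{hY_1-\psi(h)}\,d\textrm{P}$ and choose the tilt $h=h_m$ so that the tilted mean matches the deviation point, i.e. $\psi'(h_m)=y\sqrt{\nu/m}$. Since $\psi'$ is strictly increasing near $0$ and the target tends to $0$, there is a unique small root $h_m\to 0$, and inverting the Taylor series of $\psi'$ gives $h_m=\frac{y}{\sqrt{m\nu}}\bigl(1+O(y/\sqrt{m})\bigr)$. Under this change of measure the probability factors as
\begin{equation}
1-F_m(y)=e^{-m(h_m\psi'(h_m)-\psi(h_m))}\;\textrm{E}_{\tilde{\textrm{P}}_{h_m}}\!\left[e^{-h_m(S_m-m\psi'(h_m))}\mathbf{1}\{S_m\geq m\psi'(h_m)\}\right],
\end{equation}
where the threshold $m\psi'(h_m)=y\sqrt{m\nu}$ now sits exactly at the tilted mean, so the indicator event has become a central event. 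Expanding the large-deviation exponent with the cumulant series (equivalently, the Legendre transform $\Lambda^*(a)=\frac{a^2}{2\nu}-\frac{\theta_3 a^3}{6\nu^3}+\cdots$ at $a=y\sqrt{\nu/m}$) yields $m(h_m\psi'(h_m)-\psi(h_m))=\tfrac12 y^2-\frac{\theta_3 y^3}{6\sqrt{m\nu^3}}+O(y^4/m)$. This already produces the Gaussian factor $e^{-y^2/2}$ together with the cubic Cram\'er correction $\exp(\theta_3 y^3/(6\sqrt{m\nu^3}))$ with precisely the stated coefficient.

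Next I would evaluate the tilted expectation. Standardising $W_m=(S_m-m\psi'(h_m))/\sqrt{m\,\psi''(h_m)}$, and using $h_m\sqrt{m\,\psi''(h_m)}\to y$, the expectation reduces to a one-sided Gaussian integral $\int_0^\infty e^{-yw}\,d\tilde{\textrm{P}}_{h_m}(W_m\le w)$ against the centred, unit-variance tilted sum. An Edgeworth expansion for the law of $W_m$ — valid uniformly in the small tilt $h_m$, which is where Cram\'er's condition is used, through the usual control of the characteristic function away from the origin — replaces this law by a normal law plus $O(1/\sqrt{m})$ corrections. Combining the resulting Gaussian integral (of order $1/(y\sqrt{2\pi})$) with the exponential prefactor reconstructs $[1-\Phi(y)]\exp(\theta_3 y^3/(6\sqrt{m\nu^3}))$, the Edgeworth remainder contributing exactly the additive error $O(e^{-y^2/2}/\sqrt{m})$.

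The main obstacle is making the Edgeworth step uniform across the entire range $y=O(m^{1/6})$, which is precisely the critical range: there $y^3/\sqrt{m}=O(1)$, so the cubic term is the first correction that does \emph{not} vanish and must be kept in the exponent, whereas the next Cram\'er term, of order $(y/\sqrt{m})\cdot y^3/\sqrt{m}=O(y^4/m)=O(m^{-1/3})$, must be shown to be absorbable into the error. Establishing the Edgeworth bound with remainder uniformly small in $h_m$, and verifying that the quartic and higher Cram\'er coefficients contribute only $o(1)$ to the log-ratio throughout $0\le y=O(m^{1/6})$, is the delicate quantitative heart of the argument; the remainder is the now-standard bookkeeping of Cram\'er's tilting method.
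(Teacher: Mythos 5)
The paper contains no proof of this statement: Theorem~\ref{clt} is quoted directly from Petrov \cite{LimitTheorems} and used as a black box in the proofs of Lemmas~\ref{concentration1} and~\ref{concentration2}, so there is no internal argument to compare yours against. Your outline is the classical route to such Cram\'er-type moderate-deviation results, and it is essentially the proof in the cited reference: exponential tilting with $h_m$ solving $\psi'(h_m)=y\sqrt{\nu/m}$, the Legendre-transform expansion $m\bigl(h_m\psi'(h_m)-\psi(h_m)\bigr)=\tfrac12 y^2-\theta_3 y^3/(6\sqrt{m\nu^3})+O(y^4/m)$, and a Berry--Esseen/Edgeworth bound for the tilted sum whose $O(1/\sqrt m)$ remainder, multiplied by the prefactor $e^{-y^2/2}\cdot O(1)$, produces the additive $O(e^{-y^2/2}/\sqrt m)$ term. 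Your leading-order computations check out: the cubic coefficient is right, the identity $\int_0^\infty e^{-yw}\phi(w)\,dw=e^{y^2/2}[1-\Phi(y)]$ reconstructs the stated multiplicative form, and the quartic term is indeed absorbable in the range $y=O(m^{1/6})$ --- via the tail bound $1-\Phi(y)\lesssim e^{-y^2/2}/y$ one gets $[1-\Phi(y)]\,O(y^4/m)=O(y^3e^{-y^2/2}/m)=O(e^{-y^2/2}/\sqrt m)$ since $y^3=O(\sqrt m)$, so this step is less delicate than you suggest. Two points remain at the level of a plan rather than a finished proof, the first of which you flag yourself: the uniformity of the Edgeworth/Berry--Esseen remainder over the shrinking tilt family $h_m$ (controlled because the tilted moments vary continuously and $h_m\to 0$), and the bookkeeping for the fact that the exact decay rate in the tilted expectation is $h_m\sqrt{m\psi''(h_m)}=y\,(1+O(y/\sqrt m))$ rather than $y$; its effect on the one-sided integral is $O(1/\sqrt m)$, hence harmless, but it must be tracked. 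Given that the paper itself treats this theorem as citable background, a correct sketch of the standard tilting argument is an appropriate substitute for the missing proof.
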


\begin{IEEEproof}[Proof of Lemma \ref{concentration1}]
From the definition of $r_i$ and the concavity of the $\log ( \cdot
)$ function, we have
\begin{eqnarray}
|r_i - \bar{r}| & = & \left|\log (1+ \gamma_i) -
\log\left(1+\frac{1}{\alpha}\right)\right|\\
& \leq & \left|\gamma_i - \frac{1}{\alpha}\right|.
\end{eqnarray}
Thus, to prove the lemma, it is sufficient to prove that a.a.s.
\begin{equation}
\left|\gamma_i - \frac{1}{\alpha}\right| <  2\sqrt{\dfrac{\log \log
n}{\alpha^3 \log n}}(1+o ( 1 )), \quad \forall i \in \mathcal{A},
\end{equation}
or equivalently
\begin{equation}\label{equivalentbound}
x_- < \gamma_i < x_+,
\end{equation}
where
\begin{equation}\label{cdfparameter}
x_{\pm} = \dfrac{1}{\alpha}\left( 1 \pm  2\sqrt{\dfrac{\log \log
n}{\alpha \log n}}(1+o ( 1 )) \right).
\end{equation}
Here, we just prove the left-side inequality in
\eqref{equivalentbound}. The other side can be proved in a similar
manner.

Let $\mathcal{L}_2$ denote the event that
\begin{equation}\label{SINRlowerbound}
\gamma_i > x_-,  \quad \forall  i \in \mathcal{A}.
\end{equation}
In the following, we show that $\textrm{P}(\mathcal{L}_2) \to 1$ as
$n \to \infty$.

Denoting the cdf of $\gamma_i$ conditioned on $|\mathcal{A}|=k$ by
$F_{\gamma}(x,k)$, the probability of the event $\mathcal{L}_2$ is
obtained as
\begin{eqnarray}
\textrm{P}(\mathcal{L}_2)& = & \sum_{k=0}^n
\textrm{P}(|\mathcal{A}|=k)\textrm{P}(\mathcal{L}_2||\mathcal{A}|=k)\\
& \stackrel{(a)}{=} & \sum_{k=0}^n \textrm{P}(|\mathcal{A}|=k)
\left(1-F_{\gamma}(x_-,k)\right)^k\\
& \stackrel{(b)}{\geq} & \sum_{k=k_-}^{k_+}
\textrm{P}(|\mathcal{A}|=k)
\left(1-F_{\gamma}(x_-,k)\right)^k\\
& \stackrel{(c)}{>} & \left(1-F_{\gamma}(x_-,k_+)\right)^{k_+}
\sum_{k=k_-}^{k_+} \textrm{P}(|\mathcal{A}|=k)
\\
& = & \left(1-F_{\gamma}(x_-,k_+)\right)^{k_+} \textrm{P}(k_- \leq
|\mathcal{A}| \leq k_+),\label{khastehshodam}
\end{eqnarray}
where (a) is because the channel gains are independent, (b) is valid
for any $0 \leq k_- \leq k_+ \leq n$ and (c) is due to the fact that
$\left(1-F_{\gamma}(x,k)\right)^k$ is a decreasing function of $k$.
According to (\ref{kbound}), by choosing
\begin{eqnarray}
k_\pm & =  & ne^{-\Delta} \pm \xi \sqrt{ne^{-\Delta}} \\
& = & \alpha \log n \pm \xi \sqrt{\alpha \log n}, \label{kpm}
\end{eqnarray}
for some $\xi \to \infty$, we have $ \textrm{P}(k_- \leq
|\mathcal{A}| \leq k_+) \to 1 $. Hence, to prove $
\textrm{P}(\mathcal{L}_2) \to 1$, it is enough to show that $
\left(1-F_{\gamma}(x_-,k_+)\right)^{k_+} \to 1$. However, due to the
inequality
\begin{equation}
\left(1-F_{\gamma}(x_-,k_+)\right)^{k_+} \geq 1-k_+
F_{\gamma}(x_-,k_+),
\end{equation}
it is enough to show that
\begin{equation}\label{prob1cond}
k_+ F_{\gamma}(x_-,k_+) \to 0.
\end{equation}
To prove (\ref{prob1cond}), we provide an upper bound on $ k_+
F_{\gamma}(x_-,k_+) $ and show that it approaches zero as $n \to
\infty$. We have
\begin{eqnarray}
\nonumber F_{\gamma}(x_-,k_+)& = & \textrm{P}(\gamma_i \leq x_- | |\mathcal{A}|=k_+ )\\
\nonumber & \stackrel{(a)}{=} & \textrm{P} \left( \dfrac{g_{ii}}{1/\rho+\sum_{\substack{j =1\\
j \neq i}}^{k_+} g_{ji}} \leq x_- \right)\\
\nonumber & = & \textrm{P} \left(\sum_{\substack{j =1\\
j \neq i}}^{k_+} g_{ji} \geq \dfrac{g_{ii}}{x_-} -\frac{1}{\rho} \right)\\
& \stackrel{(b)}{<} & \textrm{P} \left(\sum_{\substack{j =1\\
j \neq i}}^{k_+} g_{ji} \geq \dfrac{\Delta}{x_-} -\frac{1}{\rho}
\right),\label{upperFGamma}
\end{eqnarray}
where (a) is based on $\mathcal{A}=\left\{ 1,\, \cdots,\, k_+
\right\}$, which has been assumed for simplicity of notation, and
(b) is due to the fact that, in TBLAS, $g_{ii}
> \Delta$ for any $i \in \mathcal{A}$. Let us define $Y_j=g_{ji}-1$,
which has the variance $\nu=1$. Thus, the right-hand-side (RHS) of
(\ref{upperFGamma}) translates to the complementary cdf of
$Z=\dfrac{1}{\sqrt{k_+ - 1}} \sum_{\substack{j =1\\
\nonumber j \neq i}}^{k_+} Y_j$, i.e. (\ref{upperFGamma}) can be
rewritten as
\begin{eqnarray}\label{rhs1}
F_{\gamma}(x_-,k_+)& < &  1 - \textrm{P} (Z \leq y),
\end{eqnarray}
where
\begin{eqnarray}\label{defy1}
y & = & \dfrac{\frac{\Delta}{x_-} -\frac{1}{\rho} -
(k_+-1)}{\sqrt{k_+-1}}.
\end{eqnarray}
By substituting $\Delta = \log n - \log \log n - \log \alpha$ and
the value of $x_-$ from \eqref{cdfparameter} into \eqref{defy1}, we
obtain
\begin{equation}\label{valuey1}
y = 2 \sqrt{\log \log n} (1+ o(1)).
\end{equation}

Since $Y_j$ is a shifted exponential random variable, its
moment-generating function exists around zero and the Cram\'er's
condition is satisfied. Also, by choosing $m=k_+-1$ we have
$y=O(m^{1/6})$. Hence, the result of Theorem \ref{clt} can be
applied to calculate the complementary cdf of $Z$. Consequently, by
using \eqref{clteq} with $\theta_3=\textrm{E}(Y_j^3)=2$,
\eqref{rhs1} can be rewritten as
\begin{equation}\label{rhs2}
F_{\gamma}(x_-,k_+) <
[1-\Phi(y)]\exp\left(\dfrac{y^3}{3\sqrt{k_+-1}}\right) +O\left(
\dfrac{e^{-y^2/2}}{\sqrt{k_+-1}} \right).
\end{equation}
Noting that $y^3=o(\sqrt{k_+})$ and using the inequality $1-\Phi(y)
< \frac{e^{-y^2/2}}{y}$, from \eqref{rhs2} and \eqref{valuey1}, we
conclude that
\begin{eqnarray}
k_+ F_{\gamma}(x_-,k_+) & < &  k_+ \dfrac{e^{-y^2/2}}{y}\\
& = & \exp\left( - \log \log n (1+o(1)) \right).
\end{eqnarray}
It is clear that the above upper bound approaches zero as $n \to
\infty$. Hence, $\textrm{P}(\mathcal{L}_2) \to 1$ and the proof is
complete.
\end{IEEEproof}

Lemma \ref{concentration1} shows that with the specified threshold
for TBLAS, all active links can transmit with rate $\lambda=\log ( 1
+ \frac{1}{\alpha} )$. Hence, TBLAS provides a solution, albeit
suboptimum, for the problem~(\ref{constrainedproblem}). Lemmas
\ref{strategy1constrained} and \ref{concentration1} reveal the
following relation between the demanded rate $\lambda$ and
$\kappa_{_{TBLAS}}$ as well as $\tau_{_{TBLAS}}$
\begin{eqnarray}
\kappa_{_{TBLAS}}=\dfrac{1}{e^{\lambda}-1},\label{lktradeoff}\\
\tau_{_{TBLAS}}=\dfrac{\lambda}{e^{\lambda}-1}.\label{tlstrategy1}
\end{eqnarray}
Noting that for small values of $\lambda$, the RHS of
(\ref{lktradeoff}) can be approximated as $\frac{1}{\lambda}$ and
using the upper bound in Theorem~\ref{throughputupperbound}, it
turns out that TBLAS is close to the optimum for small values of
$\lambda$.

\section{Lower Bound: A Centralized Approach}\label{LBCentralized}

Although TBLAS enjoys the simplicity of decentralized
implementation, its performance is far from the optimum. This can be
seen by comparing the upper bound in
Theorem~\ref{throughputupperbound} and the achievability result in
(\ref{lktradeoff}). A reason for this suboptimality is that the
mutual interference of the active links is not considered in
choosing $\mathcal{A}$. In this section, we provide an LAS that
performs close to the upper bound in Theorem
\ref{throughputupperbound} and turns out to be asymptotically
optimum when $\lambda$ is very large or very small. We name this
method double-threshold-based LAS (DTBLAS).

\emph{DTBLAS:} For the thresholds $\Delta$ and $\delta$
\begin{enumerate}
\item Choose the largest set $\mathcal{A}_1 \subseteq \mathcal{N}_n$ such that  $g_{ii} > \Delta$ for all $i \in \mathcal{A}_1$.
\item Choose the largest set $\mathcal{A}_2 \subseteq \mathcal{A}_1 $ such that $g_{ij} \leq \delta$ and $g_{ji} \leq
\delta$ for all $ i,\,j \in \mathcal{A}_2$.
\end{enumerate}
The set of active links is $\mathcal{A}=\mathcal{A}_2$.

This strategy chooses the links to be active in a two-phase
selection process; in the first phase, which is basically similar to
TBLAS, a subset $\mathcal{A}_1$ of the links with good enough direct
channel coefficients is chosen. In the second phase, which is the
interference management phase, a subset of links in $\mathcal{A}_1$
is chosen such that their mutual interferences are small enough.
Note that the second phase of the strategy requires full knowledge
of the channel coefficients. Hence, this scheme should be
implemented in a centralized fashion.

We aim to find $\Delta$ and $\delta$ such that the throughput is
maximized subject to the rate constraints of the active links.

For simplicity, we use the notation $k_i=|\mathcal{A}_i|$ for
$i=1,\,2$. Without loss of generality, assume
$\mathcal{A}_i=\{1,\,\cdots,\,k_i\}$. By using (\ref{ratedef}),
(\ref{SINRdef}), and (\ref{throughputdef}), and applying the
Jensen's inequality, the throughput is lower bounded as
\begin{equation}\label{lbound4}
T   \geq k_2  \log \left( 1+\dfrac{\Delta}{ 1/\rho+\frac{1}{k_2}
\sum_{i=1}^{k_2} I_i} \right ),
\end{equation}
where $I_i= \sum_{\substack{j=1\\ j \neq i}}^{k_2} g_{ji}$. Since
$g_{ji} \leq \delta$, the mean and variance of $I_i$ depend on
$\delta$. More precisely, we have
\begin{eqnarray}
\textrm{E} (I_i) & = & (k_2 -1) \hat{\mu},\label{meaninterference}\\
\textrm{Var} (I_i) & = & (k_2 -1)
\hat{\sigma}^2,\label{meanvariance}
\end{eqnarray}
where
\begin{eqnarray}
\hat{\mu} & = & \textrm{E} \left\{ g_{ji} | g_{ji} \leq \delta
\right\}=1-\dfrac{\delta e^{-\delta}}{1-e^{-\delta}},\label{conditionalmean}\\
\hat{\sigma}^2 & = & \textrm{Var} \left\{ g_{ji} | g_{ji} \leq
\delta \right\}=1-\dfrac{\delta^2
e^{-\delta}}{(1-e^{-\delta})^2}.\label{conditionalvariance}
\end{eqnarray}

Assume $\delta$ is a constant and $k_2 \to \infty$ as $n \to
\infty$. To simplify the RHS of (\ref{lbound4}), we apply the
Chebyshev inequality to obtain the upper bound
\begin{equation}\label{iupperbound}
\frac{1}{k_{2}} \sum_{i=1}^{k_{2}} I_i <(k_{2}-1)\hat{\mu} + \psi,
\end{equation}
which holds a.a.s. for any $\psi=\omega(1)$. Consequently, the lower
bound (\ref{lbound4}) becomes
\begin{equation}\label{lbound8}
T   \geq k_2  \log \left( 1+\dfrac{\Delta}{\hat{\mu} k_2 + \psi}
\right ) \;\qquad a.a.s.
\end{equation}
Note that the constant $1/\rho - \hat{\mu}$ is absorbed in the
function $\psi$. Since  $\psi$ can be chosen arbitrarily small, say
with an order smaller than $\hat{\mu} k_2$, we can rewrite
\eqref{lbound8} as
\begin{equation}
T \geq T_{_{DTBLAS}},
\end{equation}
where
\begin{equation}\label{lbound9}
T_{_{DTBLAS}}   = k_2 \left( \log \left( 1+\dfrac{\Delta}{\hat{\mu}
k_2} \right ) + o(1) \right) \;\qquad a.a.s.
\end{equation}
denotes the throughput achievable by DTBLAS.

 Since $k_2$ is a random variable, the right hand side of
(\ref{lbound9}) is a random variable as well. However, the following
discussion shows that $k_2$ is highly concentrated around a certain
value. Hence, it can be treated as a deterministic value.

Construct an undirected graph $G(\mathcal{A}_1,\,\boldsymbol{E})$
with vertex set $\mathcal{A}_1$ and the adjacency matrix $
\boldsymbol{E}=[e_{ij}] $ defined as
\begin{equation}\nonumber
e_{ij}=\left \{
\begin{tabular}{lcl}
1 & ; & $g_{ij} \leq \delta$ and $g_{ji} \leq \delta$\\
0 & ; & otherwise
\end{tabular}.
\right .
\end{equation}
The probability of having an edge between vertices $i$ and $j$, when
$g_{ji}$ and $g_{ij}$ have exponential distribution, equals
\begin{equation}\label{rgraphprob}
p=\left( 1- e^{-\delta} \right )^2.
\end{equation}\sloppy
The definition of $G$ implies that $G \in \mathcal{G}(k_1,p) $, where $
\mathcal{G}(k_1,p)$, which is a well-studied object in the literature \cite{rgas}, is the
family of $k_1$-vertex \emph{random graphs} with edge probability $p$.

In the second phase of DTBLAS, we are interested to choose the
maximum number of links whose cross channel coefficients are smaller
than $\delta$. This is equivalent to choosing the largest complete
subgraph\footnote{A complete graph is a graph in which every pair of
vertices are connected by an edge.} of $G$. The size of the largest
complete subgraph of $G$ is called its clique number and denoted by
$\textrm{cl}(G)$. The above discussion yields
\begin{equation}\label{k2concentrate}
k_2=\textrm{cl}(G), \quad \textrm{for some} \quad G \in
\mathcal{G}(k_1,p).
\end{equation}
Although the clique number of a random graph $G$ is a random
variable, the following result from random graph theory states that
it is concentrated in a certain interval.

\begin{theorem}\label{cliquenumber}
Let $ 0 < p < 1$ and $\epsilon>0$ be fixed. The clique number
$\textrm{cl}(G)$ of $G \in \mathcal{G}(m,p) $, for large values of
$m$, a.a.s. satisfies $s_1 \leq \textrm{cl}(G) \leq s_2$ where
\begin{equation}
s_i=\lfloor  2 \log _b m - 2 \log _b \log _b m(1-p) + 2 \log _b
(e/2) +1 + (-1)^i \epsilon/p  \rfloor, \quad i=1,\,2,
\end{equation}
$b=1/p$.
\end{theorem}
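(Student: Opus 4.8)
The plan is to prove this classical two-point concentration result by the first and second moment methods applied to the random variable $X_r$ counting the number of $r$-cliques (complete subgraphs on $r$ vertices) of $G \in \mathcal{G}(m,p)$. For each $r$-subset $S$ of the vertex set, let $\mathbf{1}_S$ be the indicator that $S$ spans a clique, so that $X_r = \sum_S \mathbf{1}_S$. Since $S$ spans a clique precisely when all $\binom{r}{2}$ potential edges are present, and edges appear independently with probability $p$, the expected number of $r$-cliques is
\begin{equation}
\textrm{E}(X_r) = \binom{m}{r} p^{\binom{r}{2}}.
\end{equation}
The entire argument hinges on locating the critical clique size $r_0$ at which $\textrm{E}(X_r)$ transitions from large to small. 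Taking logarithms in base $b=1/p$ (so $p^{\binom{r}{2}} = b^{-\binom{r}{2}}$) and using $\binom{m}{r}\sim m^r/r!$ together with Stirling's approximation (legitimate since $r=\Theta(\log m)=o(\sqrt{m})$), one obtains
\begin{equation}
\log_b \textrm{E}(X_r) = r\log_b m - r\log_b r + r\log_b e - \frac{r^2}{2} + \frac{r}{2} + O(\log r).
\end{equation}
Setting this to zero, dividing by $r$, and solving by the self-consistent substitution $r\approx 2\log_b m$ into the lower-order terms yields exactly the quantity inside the floor in $s_i$, namely $r_0 = 2\log_b m - 2\log_b\log_b m + 2\log_b(e/2) + 1 + o(1)$.

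For the upper bound $\textrm{cl}(G)\leq s_2$ I would use the first moment method. The ratio $\textrm{E}(X_{r+1})/\textrm{E}(X_r) = \frac{m-r}{r+1}\,p^{r}$ is decreasing in $r$ and crosses $1$ near $r_0$, so $\textrm{E}(X_r)$ falls off rapidly once $r$ exceeds $r_0$. Choosing $r=s_2+1$, which sits a fixed amount (of order $\epsilon/p$) above $r_0$, forces $\textrm{E}(X_r)\to 0$, and Markov's inequality $\textrm{P}(X_r\geq 1)\leq \textrm{E}(X_r)$ then shows that a.a.s. no clique of size $s_2+1$ exists, i.e. $\textrm{cl}(G)\leq s_2$.

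For the lower bound $\textrm{cl}(G)\geq s_1$ I would use the second moment method. Taking $r=s_1$, which lies a fixed amount below $r_0$, makes $\textrm{E}(X_r)\to\infty$, and I would bound $\textrm{P}(X_r=0)\leq \textrm{Var}(X_r)/\textrm{E}(X_r)^2$ by Chebyshev's inequality. Writing $\textrm{Var}(X_r)=\sum_{S,T}\textrm{Cov}(\mathbf{1}_S,\mathbf{1}_T)$ and grouping pairs by their overlap size $t=|S\cap T|$, the key observation is that pairs with $t\in\{0,1\}$ share no common potential edge and hence contribute zero covariance, while for $t\geq 2$ the joint probability is $p^{2\binom{r}{2}-\binom{t}{2}}$. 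This reduces the problem to showing that the normalized second moment
\begin{equation}
\frac{\textrm{E}(X_r^2)}{\textrm{E}(X_r)^2} = \sum_{t=0}^{r}\frac{\binom{r}{t}\binom{m-r}{r-t}}{\binom{m}{r}}\,p^{-\binom{t}{2}},
\end{equation}
whose $t=0,1$ terms tend to $1$, has all its remaining $t\geq 2$ terms summing to $o(1)$.

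The main obstacle is precisely this variance estimate: one must verify that neither the small-overlap terms ($t$ near $2$, where $\binom{m-r}{r-t}/\binom{m}{r}$ is tiny but $p^{-\binom{t}{2}}$ is modest) nor the large-overlap terms ($t$ near $r$, where $p^{-\binom{t}{2}}$ blows up but the combinatorial factors are minuscule) dominate, so that the sum is indeed negligible for the chosen $r=s_1$. Once $\textrm{Var}(X_r)/\textrm{E}(X_r)^2\to 0$ is established, $\textrm{P}(X_r\geq 1)\to 1$ and the lower bound follows. Finally, the slack $\pm\epsilon/p$ together with the floor function absorbs the integrality of $\textrm{cl}(G)$ and the $o(1)$ errors, pinning $\textrm{cl}(G)$ a.a.s. into the narrow interval $[s_1,s_2]$; since the statement is quoted from the random graph literature \cite{rgas}, this reconstruction mainly serves to confirm the stated constants.
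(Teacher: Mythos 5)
Your reconstruction is sound in strategy, but it is a genuinely different route from what the paper actually does for this theorem: the paper's entire proof is a one-line reduction to Theorem~7.1 of \cite{randomgraphs2}, which gives the analogous two-point concentration for the \emph{stability number} of $\mathcal{G}(m,p)$, followed by complementation ($p \mapsto 1-p$ turns stable sets into cliques). That citation is also where the otherwise mysterious factor $(1-p)$ inside the iterated logarithm comes from; for fixed $p$ it perturbs $s_i$ only by $o(1)$, which is why your direct derivation of the critical size $r_0 = 2\log_b m - 2\log_b\log_b m + 2\log_b(e/2)+1+o(1)$ lands on the same constants without it. What you propose instead is to reprove the result from scratch by the first and second moment methods, and this is in fact essentially the computation the authors themselves carry out in Appendix~B to establish Lemma~\ref{cliquenumber2}, the $p=o(1)$ analogue: the same count $\mu_s=\binom{m}{s}p^{\binom{s}{2}}$, Markov's inequality for the upper bound, and Chebyshev's inequality with the overlap decomposition $\sigma_s^2/\mu_s^2=\sum_{\ell=2}^{s}\binom{s}{\ell}\binom{m-s}{s-\ell}\binom{m}{s}^{-1}\bigl(b^{\binom{\ell}{2}}-1\bigr)$ for the lower bound. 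The one place your write-up falls short of a proof is exactly the step you flag as ``the main obstacle'': you describe, but do not execute, the verification that the $\ell\ge 2$ terms sum to $o(1)$. The standard way to close it (and the way the appendix does for the companion lemma) is to observe that $\log F_\ell$ is a convex quadratic in $\ell$, so the sum is controlled by its two endpoints, and then check that $F_2=O(s^4/(m^2 p))\to 0$ and $F_s\approx 1/\mu_s\to 0$ whenever $s$ sits a fixed amount below $r_0$. With that convexity argument supplied, your self-contained proof is correct and buys independence from the external reference, at the cost of redoing a classical computation the paper deliberately outsources.
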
\vspace{6pt}

\begin{proof}
The theorem is a direct result of Theorem 7.1 in
\cite{randomgraphs2}, which states a similar result for the
stability number of random graphs. Using the fact that the stability
number of a random graph $\mathcal{G}(m,\,p)$ is the same as the
clique number of a random graph $\mathcal{G}(m,\,1-p)$, the theorem
is proved.
\end{proof}

\begin{corol}\label{concentrationactive2}
Consider DTBLAS with parameters $\Delta$ and $\delta$. The number of
active links, $k_{_{DTBLAS}}=k_2$, a.a.s. satisfies $k_-'\leq
k_{_{DTBLAS}} \leq k_+'$, where
\begin{equation}\label{klimits2}
k_\pm '= \lfloor 2 \log_b ne^{-\Delta} - 2 \log_b \log_b
ne^{-\Delta}(1-\frac{1}{b}) + 2 \log _b (e/2) +1 \pm \epsilon/p +
o(1) \rfloor
\end{equation}
and $b=(1-e^{-\delta})^{-2}$.
\end{corol}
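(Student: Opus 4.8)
The plan is to combine the random-graph reduction of \eqref{k2concentrate} with the clique-number concentration of Theorem \ref{cliquenumber} and the concentration of $k_1$ around $ne^{-\Delta}$ from \eqref{kbound}. Recall $k_2 = \textrm{cl}(G)$ for $G \in \mathcal{G}(k_1,p)$ with $p=(1-e^{-\delta})^2$ and $b=1/p=(1-e^{-\delta})^{-2}$, so the constant $b$ here matches the $b$ of Theorem \ref{cliquenumber}. A preliminary observation I would record is that $\mathcal{A}_1$ is determined solely by the direct coefficients $\{g_{ii}\}$, while the edges of $G$ depend only on the cross coefficients $\{g_{ij}:i\neq j\}$; since all coefficients are independent, conditioning on $k_1=m$ leaves $G$ distributed exactly as $\mathcal{G}(m,p)$. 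This independence is what allows the random vertex count $m=k_1$ to be treated as a fixed parameter when invoking Theorem \ref{cliquenumber}.

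First I would restrict to the high-probability event $\mathcal{E}=\{k_-\leq k_1\leq k_+\}$ with $k_\pm=ne^{-\Delta}\pm\xi\sqrt{ne^{-\Delta}}$ as in \eqref{kpm}, choosing $\xi=\omega(1)$ but $\xi=o(\sqrt{ne^{-\Delta}})$. This is possible because $k_1\sim ne^{-\Delta}\to\infty$ by \eqref{activelinksrayleigh}, and it simultaneously guarantees $\textrm{P}(\mathcal{E})\to 1$ and a vanishing relative error $k_1/(ne^{-\Delta})=1+o(1)$ on $\mathcal{E}$. For each fixed $m$ in the window, Theorem \ref{cliquenumber} gives $s_1(m)\leq\textrm{cl}(G)\leq s_2(m)$ a.a.s.\ conditioned on $k_1=m$, where $s_i(m)$ is the floor expression of that theorem evaluated at vertex count $m$.

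The second step is to show that replacing $k_1$ by $ne^{-\Delta}$ inside $s_i$ costs only $o(1)$. Since $b$ is a fixed constant, $\log_b(\cdot)$ is the natural logarithm up to a constant factor, so on $\mathcal{E}$ we have $\log_b k_1=\log_b(ne^{-\Delta})+\log_b(1+o(1))=\log_b(ne^{-\Delta})+o(1)$ and hence $\log_b\log_b k_1=\log_b\log_b(ne^{-\Delta})+o(1)$. Substituting these into $s_1(k_1)$ and $s_2(k_1)$ reproduces the arguments of \eqref{klimits2}, the term $\log_b m(1-p)$ becoming $\log_b[ne^{-\Delta}(1-1/b)]$ because $p=1/b$. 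The residual substitution errors are absorbed by the $o(1)$ that \eqref{klimits2} already carries inside the floor, giving $s_1(k_1)=k_-'$ and $s_2(k_1)=k_+'$ and hence $k_-'\leq k_{_{DTBLAS}}\leq k_+'$ on $\mathcal{E}$.

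The main obstacle I anticipate is the interchange of the two sources of randomness: Theorem \ref{cliquenumber} is an a.a.s.\ statement for a deterministic vertex count, whereas here the count $k_1$ is itself random. I would handle this with the conditioning inequality $\textrm{P}(k_-'\leq k_2\leq k_+')\geq\sum_{m=k_-}^{k_+}\textrm{P}(k_1=m)\,\textrm{P}(s_1(m)\leq\textrm{cl}(G)\leq s_2(m)\mid k_1=m)$, noting that $s_i(m)$ equals $k_\pm'$ for every $m$ in the shrinking-relative-width window, that the conditional clique bound tends to one uniformly as $\min m\to\infty$, and that $\sum_m\textrm{P}(k_1=m)=\textrm{P}(\mathcal{E})\to 1$. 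A secondary care point is that the $o(1)$ substitution error should not inflate into a spurious $\pm1$ under the floor; this is exactly why the tolerance is stated with the $\pm\epsilon/p$ slack inherited from Theorem \ref{cliquenumber}.
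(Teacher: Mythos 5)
Your proposal is correct and follows essentially the same route as the paper: the paper's proof likewise invokes the concentration $k_1 = ne^{-\Delta} + O(\xi\sqrt{ne^{-\Delta}})$ from \eqref{kbound} with $\xi = o(\sqrt{ne^{-\Delta}})$, substitutes into \eqref{k2concentrate}, and applies Theorem~\ref{cliquenumber}. Your version merely makes explicit the conditioning on $k_1=m$ and the $o(1)$ bookkeeping that the paper leaves implicit.
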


\begin{proof}
According to \eqref{kbound}, a.a.s. we have $k_1=ne^{-\Delta} + O
(\xi \sqrt{ne^{-\Delta}})$. Assuming $\xi =o(\sqrt{ne^{-\Delta}})$,
and by substituting this value of $k_1$ into \eqref{k2concentrate}
and using Theorem~\ref{cliquenumber}, the corollary is proved.
\end{proof}

The next lemma indicates how to choose the thresholds $\Delta$ and
$\delta$ such that the throughput and the number of active links
both become proportional to $\log n$. As a result, a constant
average rate per active link is achieved.

\begin{lemma}\label{constantparameters}
Assume the threshold $\Delta$ for DTBLAS is chosen to be
\begin{equation}\label{deltavalue2}
\Delta = (1-\alpha') \log n (1 + o(1)),
\end{equation}
for some $\alpha'> 0$ and $\delta$ is a constant. Then, a.a.s. we
have
\begin{eqnarray}
\kappa_{_{DTBLAS}} & = & \dfrac{- \alpha'}{\log \left (
1-e^{-\delta}
\right)},\label{noactivelinksc}\\
\tau_{_{DTBLAS}} & = &  \dfrac{- \alpha'}{\log \left ( 1-e^{-\delta}
\right)} \log \left( 1-\dfrac{(1-\alpha') \log \left ( 1-e^{-\delta}
\right)}{\alpha' \left( 1- \dfrac{\delta
e^{-\delta}}{1-e^{-\delta}} \right)} \right), \label{throughputsc}\\
\bar{r}_{_{DTBLAS}} & = & \log \left( 1-
\dfrac{(1-\alpha')\log(1-e^{-\delta})}{\alpha' \hat{\mu}} \right ) +
o(1).
\end{eqnarray}
\end{lemma}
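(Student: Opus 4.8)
The plan is to push the chosen value of $\Delta$ through Corollary~\ref{concentrationactive2} to fix $\kappa_{_{DTBLAS}}$, and then to substitute both $\Delta$ and the resulting $k_2$ into the throughput expression~\eqref{lbound9} to read off $\tau_{_{DTBLAS}}$ and $\bar{r}_{_{DTBLAS}}$. All three quantities follow from results already established in this section, so the task is essentially careful asymptotic bookkeeping rather than a new argument.

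First I would record the key quantity $\log(ne^{-\Delta}) = \log n - \Delta = \alpha' \log n (1 + o(1))$, which follows directly from $\Delta = (1-\alpha')\log n (1+o(1))$; in particular $ne^{-\Delta} = n^{\alpha'(1+o(1))} \to \infty$, so the large-$m$ asymptotics of Corollary~\ref{concentrationactive2} apply with $m = k_1 \sim ne^{-\Delta}$. Substituting into \eqref{klimits2} and writing $\log_b x = \log x / \log b$ with $\log b = -2\log(1-e^{-\delta})$, the leading term $2\log_b ne^{-\Delta}$ equals $-\alpha' \log n (1+o(1)) / \log(1-e^{-\delta})$, while the $\log_b \log_b(\cdot)$ correction is $O(\log\log n)$ and the terms $2\log_b(e/2)+1 \pm \epsilon/p$ are $O(1)$. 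Dividing $k_\pm'$ by $\log n$, both bounds converge to the common limit $-\alpha'/\log(1-e^{-\delta})$; since $k_-' \le k_{_{DTBLAS}} \le k_+'$ a.a.s., this establishes \eqref{noactivelinksc} and, equivalently, $k_2 \sim \frac{-\alpha'}{\log(1-e^{-\delta})} \log n$ a.a.s.

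Next I would substitute this $k_2$ together with $\Delta$ into \eqref{lbound9}. The argument of the logarithm simplifies to $1 + \frac{\Delta}{\hat{\mu} k_2} = 1 - \frac{(1-\alpha')\log(1-e^{-\delta})}{\alpha' \hat{\mu}} + o(1)$, which exceeds $1$ because $\log(1-e^{-\delta}) < 0$. Dividing \eqref{lbound9} by $k_2$ and invoking definition~\eqref{averagerate} then yields $\bar{r}_{_{DTBLAS}}$, while dividing by $\log n$ and using \eqref{throughputscalingfactor} gives $\tau_{_{DTBLAS}} = \kappa_{_{DTBLAS}} \log(1 - \frac{(1-\alpha')\log(1-e^{-\delta})}{\alpha'\hat{\mu}})$; inserting $\hat{\mu} = 1 - \delta e^{-\delta}/(1-e^{-\delta})$ from \eqref{conditionalmean} recovers \eqref{throughputsc}.

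The main obstacle is controlling the error terms rather than the algebra. I must check that the $\log\log n$ correction and the $\pm\epsilon/p$ term in \eqref{klimits2} are genuinely $o(\log n)$, so that the clique-number concentration interval collapses to a single deterministic scaling limit for every fixed $\epsilon$; and I must verify that the $(1+o(1))$ factor carried by $\Delta$ passes harmlessly through the logarithm in \eqref{lbound9}, i.e. that it is not amplified even though $k_2$ itself is only known up to a $(1+o(1))$ factor. Both points reduce to the observation that a bounded multiplicative perturbation of the argument of a logarithm whose value stays bounded away from $1$ produces only an additive $o(1)$, but this should be stated explicitly rather than taken for granted.
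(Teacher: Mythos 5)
Your proposal is correct and follows essentially the same route as the paper: apply Corollary~\ref{concentrationactive2} with $\log(ne^{-\Delta})=\alpha'\log n(1+o(1))$ to get $k_2\sim 2\log_b ne^{-\Delta}=\frac{-\alpha'}{\log(1-e^{-\delta})}\log n$, then substitute $k_2$ and $\Delta$ into \eqref{lbound9} and divide by $\log n$ (resp. $k_2$) to obtain $\tau_{_{DTBLAS}}$ and $\bar{r}_{_{DTBLAS}}$. Your extra bookkeeping on the $O(\log\log n)$ and $\pm\epsilon/p$ corrections and on how the $(1+o(1))$ factor passes through the logarithm only makes explicit what the paper absorbs into the $\sim$ and $o(1)$ notation.
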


\begin{proof}
For the number of active links, we have
\begin{eqnarray}
k_{_{DTBLAS}} & \stackrel{(a)}{\sim} & 2 \log_b ne^{-\Delta}\\
& \stackrel{(b)}{=} & \dfrac{- \alpha' (1+o(1))}{\log \left (
1-e^{-\delta} \right)}  \log n, \label{activelinksstrategy2}
\end{eqnarray}
where (a) is based on Corollary~\ref{concentrationactive2} and (b)
is obtained by using \eqref{deltavalue2}. From
\eqref{activelinksstrategy2}, and by using the definition
\eqref{nousersscalingfactor}, $\kappa_{_{DTBLAS}}$ is obtained as
given in \eqref{noactivelinksc}.

The number of active links in \eqref{activelinksstrategy2} can be
used along with the value of $\Delta $ in \eqref{deltavalue2} to
rewrite (\ref{lbound9}) as
\begin{equation}\label{achievablesumratecent}
T_{_{DTBLAS}} =\left[ \dfrac{- \alpha'}{\log \left ( 1-e^{-\delta}
\right)} \log \left( 1-\dfrac{(1-\alpha') \log \left ( 1-e^{-\delta}
\right)}{\alpha' \hat{\mu}} \right) + o(1) \right] \log n.
\end{equation}
The scaling factor $\tau_{_{DTBLAS}}$, as given in the Lemma, is
obtained by using the value of $\hat{\mu}$ from
\eqref{conditionalmean} and applying the definition
\eqref{throughputscalingfactor}. The value of $\bar{r}_{_{DTBLAS}}$
is obtained by using the definition \eqref{averagerate}. This
completes the proof.
\end{proof}

According to this lemma, by proper choose of the constants $\alpha'$
and $\delta$, the average rate per active link $\bar{r}_{_{DTBLAS}}$
can be adjusted to be equal to the required rate $\lambda$. A
natural question is whether, under the specified conditions in
DTBLAS, all active links can support the rate $\lambda$. The
following lemma addresses this issue and shows that a.a.s. the rate
of all active links are highly concentrated around the average
value~$\bar{r}_{_{DTBLAS}}$.

\begin{lemma}\label{concentration2}
Consider DTBLAS with thresholds $\delta$ and $\Delta=(1-\alpha')
\log n$ for some $\alpha' > 0$. Then, a.a.s. we have
\begin{equation}
|r_i - \bar{r}| < c \sqrt{\dfrac{\log \log n}{ \log n}}(1+o ( 1 )),
\quad \forall i \in \mathcal{A},
\end{equation}
for some constant $c>0$, where $$\bar{r}=\log \left( 1 -
\dfrac{(1-\alpha')\log(1-e^{-\delta})}{ \alpha' \hat{\mu}} \right
).$$
\end{lemma}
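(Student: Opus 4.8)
The plan is to follow the proof of Lemma~\ref{concentration1} step by step, modifying it to account for two new features of DTBLAS: the interference now comes from \emph{truncated} cross gains, and the active set is a maximum clique rather than a plain threshold set. Writing $\gamma_0 = e^{\bar r}-1 = -\frac{(1-\alpha')\log(1-e^{-\delta})}{\alpha'\hat{\mu}}$ and using the concavity of $\log(1+\cdot)$, so that $|r_i-\bar r| = |\log(1+\gamma_i)-\log(1+\gamma_0)| \le |\gamma_i-\gamma_0|$, I would reduce the claim to the two-sided bound $x_- < \gamma_i < x_+$ for all $i\in\mathcal{A}$, where
\[
x_\pm = \gamma_0\left(1 \pm \frac{c}{\gamma_0}\sqrt{\frac{\log\log n}{\log n}}\,(1+o(1))\right),
\]
and, as in Lemma~\ref{concentration1}, prove only the lower inequality. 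Unlike there, the constant $c$ is left free; it will be fixed at the very end, which is the main structural simplification.

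Second, let $\mathcal{L}$ be the event $\gamma_i > x_-$ for all $i\in\mathcal{A}$. Conditioning on $k_2=k$, using the monotonicity of $(1-F_{\gamma}(x,k))^k$ in $k$, and invoking the concentration interval for $k_2$ from Corollary~\ref{concentrationactive2}, I would bound
\[
\textrm{P}(\mathcal{L}) \ge \left(1-F_{\gamma}(x_-,k_+')\right)^{k_+'}\,\textrm{P}(k_-' \le k_2 \le k_+'),
\]
reducing the goal to $k_+' F_{\gamma}(x_-,k_+')\to 0$. A point worth recording is that the width $k_+'-k_-'$ is only $O(1)$ while $k_2=\Theta(\log n)$, so replacing the nominal value of $k_2$ by the endpoint $k_+'$ perturbs the relevant quantities by a relative $O(1/\log n)$, negligible on the $\sqrt{\log\log n/\log n}$ scale.

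The crux is the independence hidden in $F_{\gamma}(x_-,k)$, and I expect it to be the hardest step. Conditioned on the selected clique being $\mathcal{A}_2=\{1,\dots,k\}$, I claim the within-clique cross gains $g_{ji}$ are i.i.d.\ exponentials truncated to $[0,\delta]$, with mean $\hat{\mu}$ and variance $\hat{\sigma}^2$ from \eqref{conditionalmean}--\eqref{conditionalvariance}, independent of the direct gains $g_{ii}>\Delta$; moreover the SINRs $\gamma_i$ are mutually independent, since the interference terms $I_i=\sum_{j\ne i}g_{ji}$ use disjoint sets of coefficients. The subtle point, which has no counterpart in Lemma~\ref{concentration1}, is that the clique is selected through the \emph{binary} adjacency of $G$, i.e.\ through the events $\{g_{ij}\le\delta\}\cap\{g_{ji}\le\delta\}$; conditioning on which set is the maximum clique fixes only these binary indicators, and given them the sub-threshold gain values remain i.i.d.\ truncated exponentials and stay independent of the direct gains and of the maximality constraint on the other pairs. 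Granting this decoupling, $g_{ii}>\Delta$ gives $F_{\gamma}(x_-,k)=\textrm{P}(I_i \ge g_{ii}/x_- - 1/\rho) < \textrm{P}(I_i \ge \Delta/x_- - 1/\rho)$.

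Finally, I would set $Y_j=g_{ji}-\hat{\mu}$ (zero mean, variance $\hat{\sigma}^2$, bounded support so Cram\'er's condition holds) and recast the tail as a large-deviation event for $Z=\frac{1}{\sqrt{(k-1)\hat{\sigma}^2}}\sum_j Y_j$ at
\[
y = \frac{\Delta/x_- - 1/\rho - (k-1)\hat{\mu}}{\sqrt{(k-1)\hat{\sigma}^2}}.
\]
Substituting $\Delta=(1-\alpha')\log n$, the value of $x_-$, and $k-1\sim\kappa_{_{DTBLAS}}\log n$, and using the identity $\Delta/\gamma_0=\hat{\mu}k_2$ (the constant $1/\rho$ being negligible), a routine expansion gives $y=C\sqrt{\log\log n}\,(1+o(1))$ with $C$ proportional to $c$. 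Since then $y=O((k-1)^{1/6})$ and $y^3=o(\sqrt{k-1})$, Theorem~\ref{clt} applies with $\theta_3=\textrm{E}(Y_j^3)$ a finite constant, and together with $1-\Phi(y)<e^{-y^2/2}/y$ it yields
\[
k_+' F_{\gamma}(x_-,k_+') < k_+'\,\frac{e^{-y^2/2}}{y}\,(1+o(1)) = \Theta\!\left((\log n)^{1-C^2/2}\right).
\]
Choosing $c$ large enough that $C>\sqrt{2}$ forces this to vanish, so $\textrm{P}(\mathcal{L})\to 1$; the upper inequality $\gamma_i<x_+$ follows symmetrically, completing the proof.
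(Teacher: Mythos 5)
Your proposal follows the paper's own proof of Lemma~\ref{concentration2} essentially step for step: the same reduction via concavity to the two-sided SINR bound $x_-'<\gamma_i<x_+'$, the same conditioning on the concentration interval for $k_2$ from Corollary~\ref{concentrationactive2}, the same shifted truncated-exponential variables $Y_j=g_{ji}-\hat{\mu}$ with variance $\hat{\sigma}^2$, and the same application of Theorem~\ref{clt} followed by choosing $c$ large enough that the exponent $1-c'^2\kappa_{_{DTBLAS}}\hat{\mu}^2/(2\hat{\sigma}^2)$ becomes negative. The only addition is your explicit justification that, conditioned on the adjacency indicators defining the clique, the within-clique cross gains remain i.i.d.\ truncated exponentials independent of the direct gains --- a point the paper leaves implicit by deferring to the proof of Lemma~\ref{concentration1}.
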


\begin{proof}
See Appendix \ref{proofconcentration2}.
\end{proof}

According to Lemmas \ref{constantparameters} and
\ref{concentration2}, when maximizing the throughput of DTBLAS,
$\delta$ should be a constant and $\Delta$ is obtained from another
constant $\alpha'$. Hence, the rate-constrained throughput
maximization simplifies to an optimization problem with constant
parameters $\alpha'$ and~$\delta$. Assume $\gamma_0$ is a quantity
that satisfies $\lambda=\log(1+\gamma_0)$, i.e., $\gamma_0$ is the
required \emph{SINR} by the active links. Instead of the number of
active links, we can maximize the scaling factor of the number of
active links given in Lemma \ref{constantparameters}. Hence, the
rate-constrained throughput maximization problem
\eqref{constrainedproblem} is converted for DTBLAS to the following
optimization problem
\begin{eqnarray}
\max_{\alpha',\,\delta} & & \dfrac{- \alpha'}{\log \left (
1-e^{-\delta}
\right)} \label{constrainedopt}\\
\textrm{s.t.} & & -\dfrac{(1-\alpha') \log \left ( 1-e^{-\delta}
\right)}{\alpha' \left( 1- \dfrac{\delta e^{-\delta}}{1-e^{-\delta}}
\right)}=\gamma_0.\label{constraint2}
\end{eqnarray}
Note that in contrast to problem \eqref{constrainedproblem}, there
is only one constraint in this problem. However, according to
Lemma~\ref{concentration2}, this single constraint guarantees the
required rate for all active links. From the equality constraint
\eqref{constraint2}, parameter $\alpha'$ can be found in terms of
$\delta$ as
\begin{equation}\label{equivalentequalityconst}
\alpha'=\dfrac{- \log \left ( 1-e^{-\delta} \right)}{\gamma_0 \left(
1- \dfrac{\delta e^{-\delta}}{1-e^{-\delta}} \right) - \log \left (
1-e^{-\delta} \right)}.
\end{equation}
By substituting this value in the objective function
\eqref{constrainedopt}, we obtain the following equivalent
unconstrained optimization problem
\begin{equation}\label{unconstrainedopt}
\min_{\delta} \gamma_0 \left( 1- \dfrac{\delta
e^{-\delta}}{1-e^{-\delta}} \right) - \log \left ( 1-e^{-\delta}
\right).
\end{equation}
Consequently, $(\alpha'^*,\, \delta^*)$, the solution of
(\ref{constrainedopt}), can be obtained by first finding $\delta^*$
from (\ref{unconstrainedopt}) and then substituting it into
(\ref{equivalentequalityconst}) to obtain $\alpha'^*$. Due to the
complicated form of (\ref{unconstrainedopt}), it is not possible to
find $\delta^*$ analytically and it should be found numerically.

Fig. \ref{optimumdeltafigure} shows $\delta^*$ and $\alpha'^*$
versus $\lambda$.
\begin{figure}
\centering
\includegraphics[scale=0.85]{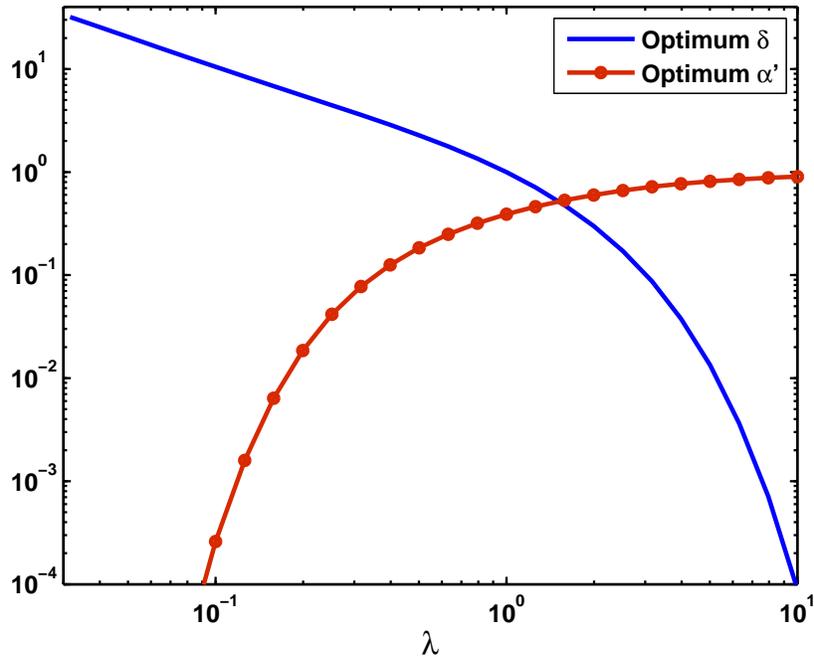}\\
\caption{Optimum of the threshold $\delta$ and the parameter
$\alpha'$ vs. the demanded rate
$\lambda$.}\label{optimumdeltafigure}
\end{figure}
The values of $\delta^*$ and $\alpha'^*$ can be replaced in
(\ref{throughputsc}) and (\ref{noactivelinksc}) to obtain the
maximum throughput scaling factor ($\tau^*_{_{DTBLAS}}$) as well as
the maximum scaling factor for the number of active links
($\kappa^*_{_{DTBLAS}}$). The value $\tau^*_{_{DTBLAS}}$ is shown in
Fig. \ref{throughputscfigure}.
\begin{figure}
\centering
\includegraphics[scale=0.85]{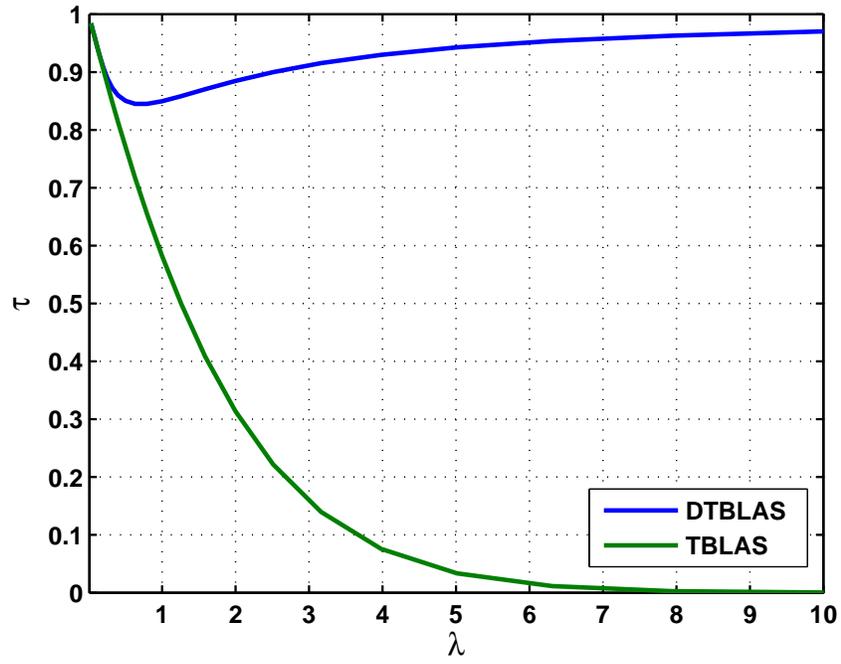}\\
\caption{Maximum throughput scaling factor vs. the demanded rate
$\lambda$.}\label{throughputscfigure}
\end{figure}
\begin{figure}
\centering
\includegraphics[scale=0.85]{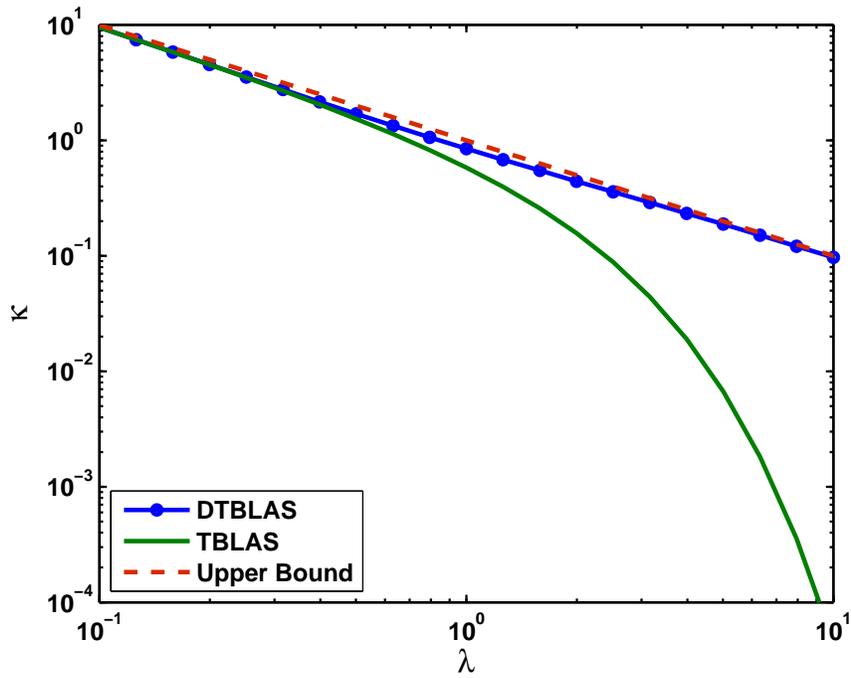}\\
\caption{Tradeoff between rate-per-link and the number of active
links.}\label{tradeofffigure}
\end{figure}
Depicted in the figure is also the throughput scaling factor of
TBLAS obtained from (\ref{tlstrategy1}). As it is observed, for
small values of $\lambda$, the performance of TBLAS and DTBLAS are
almost the same. However, as $\lambda$ grows larger, the scaling
factor of TBLAS approaches zero, but the scaling factor of DTBLAS
approaches 1. This shows some kind of optimality for DTBLAS which
will be later proven formally. Figure \ref{tradeofffigure}
demonstrates the tradeoff between the number of supported links and
the demanded rate-per-link for TBLAS and DTBLAS. The tradeoff curve
for TBLAS is obtained from (\ref{lktradeoff}). The upper bound from
Theorem \ref{throughputupperbound} is also plotted for comparison.
As observed, for a ceratin value of $\lambda$, DTBLAS can support
larger number of users, especially for larger values of $\lambda$.
Indeed, the tradeoff curve of DTBLAS is very close to the upper
bound. Specifically, for large values of $\lambda$, these two curves
coincide. This will be later proven formally.

\section{Optimality Results}\label{optimality}

Although the behaviour of DTBLAS is numerically described in Figs.
\ref{optimumdeltafigure}, \ref{throughputscfigure}, and
\ref{tradeofffigure}, it is possible and also insightful to obtain
closed form expressions for $\delta^*$ and $\alpha'^*$ as well as
$\kappa^*_{_{DTBLAS}}$ and $\tau^*_{_{DTBLAS}}$ when $\lambda$ is
very small or very large. An important result of these extreme-case
analyses is the asymptotic optimality of DTBLAS.

Setting the derivative of the objective function
(\ref{unconstrainedopt}) equal to zero reveals that, at the optimum
point, $\delta$, satisfies
\begin{equation}\label{setderivativezero}
e^{\lambda} (1-e^{-\delta}-\delta) + \delta = 0.
\end{equation}
Two extreme cases of large $\lambda$ and small $\lambda$ are
discussed separately in the following.

\paragraph{Large $\lambda$} In this case, solving (\ref{setderivativezero}) yields
\begin{equation}
\delta^* = 2 e^{-\lambda} + O \left( e^{-2 \lambda} \right).
\end{equation}
Consequently, $\alpha'^*$, $\tau^*$, and $\kappa^*_{_{DTBLAS}}$ are
obtained as
\begin{eqnarray}
\alpha'^*  & = & 1 -  \dfrac{1}{\lambda} + O \left( \dfrac{1}{\lambda^2} \right)\\
\tau^*_{_{DTBLAS}} & = & 1 -  \dfrac{\log(e/2)}{\lambda} + O \left(
\dfrac{1}{\lambda^2} \right)\\
\kappa^*_{_{DTBLAS}} & = & \dfrac{1}{\lambda}+O \left(
\dfrac{1}{\lambda^2} \right).
\end{eqnarray}
As it is seen from the above equations, for large values of
$\lambda$, $\delta^*$ becomes very small and $\alpha'^*$ approaches
one. This means, when large rate-per-links are demanded, it is more
crucial to manage the interference than to choose links with high
direct gain.

\paragraph{Small $\lambda$} In this case, solving (\ref{setderivativezero}) yields
\begin{equation}
\delta^* = \dfrac{1}{\lambda} + \dfrac{1}{2} + O(\lambda).
\end{equation}
Consequently, $\alpha'^*$, $\tau^*_{_{DTBLAS}}$, and
$\kappa^*_{_{DTBLAS}}$ are obtained as
\begin{eqnarray}
\alpha'^*  & = &  e^{-  \frac{1}{\lambda} - \frac{1}{2}} \left( \dfrac{1}{\lambda} + \dfrac{1}{2} + O(\lambda) \right)  \\
\tau^*_{_{DTBLAS}} & = & 1 -  \dfrac{\lambda}{2} + O \left(
\lambda^2 \right)
\label{throughputscsmallrate}\\
\kappa^*_{_{DTBLAS}} & = & \dfrac{1}{\lambda} -\dfrac{1}{2}+O \left(
\lambda \right).
\end{eqnarray}
The above equations show that for small values of $\lambda$,
$\delta^*$ is very large and $\alpha'^*$ is very small. In other
words, DTBLAS is converted to its special case, TBLAS.

The above discussion yields the following optimality result on
DTBLAS.

\begin{theorem}
Consider the rate-constrained throughput maximization problem
\eqref{constrainedproblem}. Assume $\tau^*_c$ and $\kappa^*_c$ are
the maximum achievable scaling factors of the throughput and the
number of supported links, respectively. Also, assume
$\tau_{_{DTBLAS}}^*$ and $\kappa^*_{_{DTBLAS}}$ are the maximum
scaling factor of the throughput and the number of active links when
DTBLAS is deployed. Then, a.a.s. we have
\begin{eqnarray}
\lim_{\lambda \to \infty} & (\tau^*_{_{DTBLAS}} - \tau^*_c) & =
 0,\\
\lim_{\lambda \to \infty} & (\kappa_{_{DTBLAS}}^* - \kappa^*_c) & =  0,\\
\lim_{\lambda \to 0} & (\tau^*_{_{DTBLAS}} - \tau^*_c) & =
0,\\
\lim_{\lambda \to 0} & \dfrac{\kappa_{_{DTBLAS}}^*}{\kappa^*_c} & =
1.
\end{eqnarray}
\end{theorem}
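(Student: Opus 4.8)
The plan is to combine two ingredients that are already in hand: the upper bound from Theorem~\ref{throughputupperbound}, which gives $\kappa^*_c < 1/\lambda$ and $\tau^*_c < 1$ (since $T^*_c < \log n - \log\log n + c$ means $\tau^*_c = \lim T^*_c/\log n \le 1$), and the explicit small- and large-$\lambda$ expansions of $\tau^*_{_{DTBLAS}}$ and $\kappa^*_{_{DTBLAS}}$ derived just above from the optimal-threshold equation \eqref{setderivativezero}. The structure of the argument is to sandwich the DTBLAS performance between a trivial lower bound (DTBLAS is a feasible strategy, so $\tau^*_{_{DTBLAS}} \le \tau^*_c$ and $\kappa^*_{_{DTBLAS}} \le \kappa^*_c$) and the upper bound, and then show the two sides agree in each limit.

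First I would record the feasibility direction: since DTBLAS produces an admissible set of active links satisfying the rate constraint (by Lemma~\ref{concentration2}), its achieved scaling factors cannot exceed the optimum, so a.a.s.\ $\tau^*_{_{DTBLAS}} \le \tau^*_c$ and $\kappa^*_{_{DTBLAS}} \le \kappa^*_c$. Next I would invoke the upper bound. Theorem~\ref{throughputupperbound} gives $\kappa^*_c < 1/\lambda$ directly, and dividing the throughput bound by $\log n$ and taking $n\to\infty$ gives $\tau^*_c \le 1$. Combining these, for every $\lambda$ we have the chain $\tau^*_{_{DTBLAS}} \le \tau^*_c \le 1$ and $\kappa^*_{_{DTBLAS}} \le \kappa^*_c \le 1/\lambda$.

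The limits then fall out by squeezing. For the large-$\lambda$ throughput claim, the expansion gives $\tau^*_{_{DTBLAS}} = 1 - \log(e/2)/\lambda + O(1/\lambda^2) \to 1$, so both ends of $\tau^*_{_{DTBLAS}} \le \tau^*_c \le 1$ converge to $1$, forcing $\tau^*_c \to 1$ and hence $\tau^*_{_{DTBLAS}} - \tau^*_c \to 0$. The same logic applied to $\kappa^*_{_{DTBLAS}} = 1/\lambda + O(1/\lambda^2)$ against the ceiling $1/\lambda$ shows $\lambda\,\kappa^*_{_{DTBLAS}} \to 1$ and $\lambda\,\kappa^*_c \to 1$, whence $\kappa^*_{_{DTBLAS}} - \kappa^*_c \to 0$. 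For small $\lambda$, the throughput expansion $\tau^*_{_{DTBLAS}} = 1 - \lambda/2 + O(\lambda^2) \to 1$ again squeezes $\tau^*_c$ to $1$, giving the third limit. For the ratio $\kappa^*_{_{DTBLAS}}/\kappa^*_c$, I would use $\kappa^*_{_{DTBLAS}} = 1/\lambda - 1/2 + O(\lambda)$ and the bound $\kappa^*_c < 1/\lambda$ together with feasibility $\kappa^*_c \ge \kappa^*_{_{DTBLAS}}$; dividing $\kappa^*_{_{DTBLAS}} \le \kappa^*_c < 1/\lambda$ through by $\kappa^*_{_{DTBLAS}}$ shows the ratio is pinned between $1$ and $(1/\lambda)/(1/\lambda - 1/2 + O(\lambda))$, the latter tending to $1$ as $\lambda\to 0$.

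The one subtlety worth flagging — the step that is slightly more than bookkeeping — is the passage from the throughput upper bound stated as $T^*_c < \log n - \log\log n + c$ to a clean statement about the scaling factor $\tau^*_c$; one must note that $T^*_c/\log n \to 1$ from below so that $\tau^*_c \le 1$, and verify that the a.a.s.\ qualifiers on the upper bound and on the DTBLAS concentration results are compatible so that all inequalities hold a.a.s.\ simultaneously. Everything else is an elementary sandwich using limits already computed in the extreme-case analysis, so I expect no genuine obstacle; the work is entirely in lining up the right inequalities in each of the four regimes and confirming the expansions match the bounds to leading order.
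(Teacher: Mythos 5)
Your argument is correct and is exactly the route the paper intends: its proof is the one-line remark that the result follows from the upper bounds of Theorem~\ref{throughputupperbound} combined with the large-$\lambda$ and small-$\lambda$ expansions of $\tau^*_{_{DTBLAS}}$ and $\kappa^*_{_{DTBLAS}}$, i.e.\ precisely your sandwich between feasibility and the upper bound. You have merely written out the squeeze (including the correct observation that for $\lambda\to 0$ only the ratio, not the difference, of the $\kappa$'s converges) that the paper leaves implicit.
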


\begin{proof}
The proof of the theorem is straightforward by using the upper
bounds provided in Theorem~\ref{throughputupperbound} and the
asymptotic achievability results provided in this section.
\end{proof}

\section{Noise-Limited Regime}\label{noise-limited}

In the previous sections, we considered an interference-limited
regime in which the noise power is negligible in comparison with the
interference power. In this case, the achievable throughput is not a
function of the network \emph{SNR}. In other words, changing the
transmission powers does not affect the supportable rate of each
link. However, in a practical scenario, it is appealing to have
rates which scale by increasing $\rho$. This way, the transmission
rates can be easily adjusted by changing the transmission powers.
Specifically, it is desirable that the rate of active links a.a.s.
scale as
\begin{equation}\label{noiselimitedregime}
r_i=\log \left(1 + \dfrac{g_{ii}}{1/\rho + \beta_i} \right),\quad
\forall i \in \mathcal{A},
\end{equation}
for some $\beta_i=O(1)$, which are the design parameters. At the
same time, we require the conditions of problem
\eqref{constrainedproblem}, i.e. $r_i \geq \lambda$, be satisfied.
In this section, we show how to realize such a situation by using
DTBLAS.

According to \eqref{noiselimitedregime}, we should a.a.s. have $I_i=
\beta_i$, where $I_i$ is the interference observed by active link
$i$  and is defined in \eqref{lbound4}. However, this requires that
$\textrm{E}(I_i)=\beta_i$. Noting that
$\textrm{E}(I_i)=(k_2-1)\hat{\mu}$ (see \eqref{meaninterference}),
we conclude that all $\beta_i$s should take a same value, say
$\beta$. Hence, a necessary condition for being in the noise-limited
regime is
\begin{equation}\label{equivalentcondition2}
(k_2-1) \hat{\mu} = \beta,
\end{equation}
where $\beta=O(1)$ is a design parameters. Later, we show that
\eqref{equivalentcondition2} is also a sufficient condition for
operating in a noise-limited regime.

Based on the above discussion, we propose the following scheme for
choosing the parameters of DTBLAS for a noise-limited regime: For a
given required rate $\lambda=\log(1+\gamma_0)$ and the interference
$\beta$,
\begin{enumerate}
\item choose $\Delta$ as
\begin{equation}\label{secondcondition}
\Delta=\Delta_0=\gamma_0 (1/\rho + \beta).
\end{equation}
\item choose $\delta$ such that \eqref{equivalentcondition2} is
satisfied.
\end{enumerate}
Note that the selection of $\Delta$ is such that the rate
constraints $r_i \geq \lambda$ are satisfied. Also, as will be shown
later, the selection of $\delta$ is such that operation in the
noise-limited regime is guaranteed.

The next step is to solve \eqref{equivalentcondition2} to obtain the
value of $\delta$ and the corresponding number of active links
$k_2$. By using \eqref{conditionalmean}, which gives the value of
$\hat{\mu}$ in terms of $\delta$, it is clear that
\eqref{equivalentcondition2} holds only if $\delta \to 0 $ as $k_2
\to \infty$. In this case, \eqref{conditionalmean} converts to
$\hat{\mu}=\dfrac{\delta}{2} + O(\delta^2)$ and
\eqref{equivalentcondition2} simplifies to
\begin{equation}\label{equivalentcondition}
k_2 \delta = 2\beta \quad a.a.s
\end{equation}

To solve \eqref{equivalentcondition} and obtain $\delta$, we should
first obtain the value of $k_2$ in terms of $n$ and $\delta$. From
(\ref{kbound}) and condition (\ref{secondcondition}), the number of
links chosen by phase (i) of DTBLAS is obtained as
\begin{equation}\label{noactivelinkscphae1}
k_1 = n e ^{- \Delta_0} + O\left( \xi \sqrt{n e ^{- \Delta_0}}
\right).
\end{equation}
Also, recall from \eqref{k2concentrate} that $k_2$ is the clique
number of a random graph $\mathcal{G} (k_1,\,p) $, where $p$ is
obtained from (\ref{rgraphprob}). Since $\delta \to 0$,
(\ref{rgraphprob}) can be rewritten as
\begin{equation}\label{probability}
p=\delta ^ 2 + O(\delta ^ 3),
\end{equation}
which approaches zero as well. Note that Theorem~\ref{cliquenumber},
which was adopted from \cite{randomgraphs2}, and a similar result
that appears in \cite{randomgraphs}, are valid only for a fixed
value of $p$. A natural question is whether a similar concentration
result on the clique number of random graphs holds when $p$
approaches zero. In the following lemma, we address this issue and
obtain a concentration result on the clique number for
zero-approaching values of $p$.

\begin{lemma}\label{cliquenumber2}
Let $p=p(m)$ be such that $p=o(1)$ and $p=\omega(m^{-a})$ for all
$a>0$. For fixed $\epsilon>0$, the clique number $\textrm{cl}(G)$ of
$G \in \mathcal{G}(m,p) $ a.a.s. satisfies $ \lfloor s \rfloor \leq
\textrm{cl}(G) \leq \lfloor s \rfloor +1 $, where
\begin{equation}\nonumber
s=2 \log _b m - 2 \log _b \log _b m +1 - 4\log _b 2 -
\dfrac{\epsilon}{\log b},
\end{equation}
$b=1/p$.
\end{lemma}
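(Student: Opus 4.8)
The plan is to prove both inequalities by the first- and second-moment method, re-doing the classical fixed-$p$ argument behind Theorem~\ref{cliquenumber} but tracking the asymptotics carefully in the regime $b=1/p\to\infty$. Let $X_r$ be the number of cliques on $r$ vertices in $G\in\mathcal{G}(m,p)$. Since a given set of $r$ vertices spans a clique with probability $p^{\binom{r}{2}}$, linearity of expectation gives $\textrm{E}(X_r)=\binom{m}{r}p^{\binom{r}{2}}$, and Stirling's approximation (as used in the proof of Theorem~\ref{throughputupperbound}) yields
\begin{equation}\nonumber
\log\textrm{E}(X_r)=r\left(\log m-\log r+1\right)-\binom{r}{2}\log b+O(\log r).
\end{equation}
The value $s$ in the statement is (up to the $O(\log r)$ refinements) the root of the leading-order equation $\log m-\log r+1=\tfrac{r-1}{2}\log b$; the key structural fact is that the hypotheses $p=o(1)$ and $p=\omega(m^{-a})$ for all $a>0$ give $\log b=o(\log m)$, hence $\log_b m\to\infty$ and $s\sim 2\log_b m\to\infty$ with $s=m^{o(1)}$. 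The role of the $-\epsilon/\log b$ term is to shift $s$ strictly below the relevant integer so as to force the expectation to diverge at $\lfloor s\rfloor$ and to vanish a couple of steps higher.

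For the upper bound, I would set $r_+:=\lfloor s\rfloor+2$ and show $\textrm{E}(X_{r_+})\to 0$. The consecutive ratio $\textrm{E}(X_{r+1})/\textrm{E}(X_r)=\tfrac{m-r}{r+1}\,p^{r}$ equals $m^{-1+o(1)}$ once $r$ crosses the threshold (because $p^{r}\approx m^{-2}$ there), so the expectation collapses rapidly past $s$; Markov's inequality $\textrm{P}(X_{r_+}\geq 1)\leq\textrm{E}(X_{r_+})$ then gives $\textrm{cl}(G)\leq\lfloor s\rfloor+1$ a.a.s.

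For the lower bound I would take $r_-:=\lfloor s\rfloor$, verify $\textrm{E}(X_{r_-})\to\infty$, and apply Chebyshev's inequality in the form $\textrm{P}(X_{r_-}=0)\leq\textrm{Var}(X_{r_-})/\textrm{E}(X_{r_-})^2$. Splitting the variance according to the number $i$ of vertices shared by two $r_-$-cliques (only $i\geq 2$ contributes, as fewer than two shared vertices give independence) reduces the problem to showing
\begin{equation}\nonumber
\frac{\textrm{Var}(X_{r_-})}{\textrm{E}(X_{r_-})^2}\leq\frac{1}{\textrm{E}(X_{r_-})}+\sum_{i=2}^{r_--1}\frac{\binom{r_-}{i}\binom{m-r_-}{r_--i}}{\binom{m}{r_-}}\,p^{-\binom{i}{2}}=o(1).
\end{equation}
The diagonal term $1/\textrm{E}(X_{r_-})\to 0$, so the content is in the off-diagonal sum. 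This is the main obstacle and the place where $p\to 0$ genuinely matters: unlike the fixed-$p$ case, the inflation factor $p^{-\binom{i}{2}}=b^{\binom{i}{2}}$ now grows with $m$, and one must check it is beaten by the combinatorial shrinkage $\binom{r_-}{i}\binom{m-r_-}{r_--i}/\binom{m}{r_-}$ for every $2\leq i\leq r_--1$. The binding case is $i=2$, whose term is of order $r_-^{4}b/m^{2}=r_-^{4}/(m^{2}p)\to 0$ since $r_-=m^{o(1)}$ and $p=m^{-o(1)}$; a unimodality/ratio argument then shows the summand is maximized at the extreme values of $i$, so the whole sum is $o(1)$. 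Chebyshev then yields $X_{r_-}>0$ a.a.s., i.e. $\textrm{cl}(G)\geq\lfloor s\rfloor$, completing the two-sided bound. I expect this uniform control of the shared-vertex sum as $p\to 0$ to be the technical crux of the proof.
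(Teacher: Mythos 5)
Your overall route is the same as the paper's: Markov's inequality on $\mathrm{E}(X_r)=\binom{m}{r}p^{\binom{r}{2}}$ for the upper bound, and Chebyshev with the overlap decomposition $\mathrm{Var}(X_r)/\mathrm{E}(X_r)^2=\sum_{\ell=2}^{r}\binom{r}{\ell}\binom{m-r}{r-\ell}\binom{m}{r}^{-1}\bigl(b^{\binom{\ell}{2}}-1\bigr)$ for the lower bound, with a convexity-in-$\ell$ argument reducing the sum to its endpoints. The genuine problem is your identification of the binding term. You assert that the binding case is $i=2$, of order $r_-^4/(m^2p)$, and that is the only term you actually compute. In the regime of this lemma the convex exponent attains its maximum over $[2,r_-]$ at the \emph{right} endpoint, not the left: the paper explicitly proves $g(r)>g(2)$ once $r>2\log_b m-2\log_b\log_b m-2\log_b 2$, which holds at $r_-=\lfloor s\rfloor$. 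Concretely, the $i=r_--1$ term is of order $r_-\,m\,p^{\,r_--1}/\mu_{r_-}=m^{-1+o(1)}$, which dominates your $i=2$ term of order $m^{-2+o(1)}$ because $b=m^{o(1)}$; and the diagonal term $1/\mu_{r_-}=e^{-r_-(\log(2e)+\epsilon/2+o(1))}=m^{-o(1)}$ is larger still, vanishing more slowly than any power of $m$. This high-overlap end is exactly why the constant in $s$ is $1-4\log_b 2-\epsilon/\log b$ rather than the first-moment constant $1+2\log_b(e/2)+\epsilon/\log b$ appearing in the upper bound; the margin is dictated by $\ell$ near $r_-$, not by $\ell=2$.

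The gap is fixable rather than fatal: all terms do tend to zero at $r_-=\lfloor s\rfloor$, and since the summand is maximized at an endpoint and there are only $r_-=o(\log m)$ terms, the sum is $o(1)$ once you verify \emph{both} endpoints together with $\mu_{r_-}\to\infty$ (which needs the explicit computation $\log\mu_{r_-}\sim r_-(\log(2e)+\epsilon/2)$ using the stated constant in $s$). But as written, the single term you check is precisely the one that is not binding, so the uniform control of the shared-vertex sum---which you yourself flag as the technical crux---is not established by your argument. Your upper-bound half (showing $\mathrm{E}(X_{\lfloor s\rfloor+2})\to 0$ via the one-step ratio $\frac{m-r}{r+1}p^{r}=m^{-1+o(1)}$) is fine and matches the paper's computation of $s_1$ with $s_1-s=o(1)$.
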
\vspace{6pt}

\begin{proof}
See the Appendix.
\end{proof}
By using this lemma, (\ref{noactivelinkscphae1}),
(\ref{probability}), and assuming $\xi=o(\sqrt{n})$, the number of
active links a.a.s. becomes
\begin{equation}\label{numberofactivephase2}
k_2 =\left \lfloor \dfrac{\log n - \log \log n}{-\log \delta} \right
\rfloor.
\end{equation}
Thus, \eqref{equivalentcondition} can be rewritten as
\begin{equation}\label{interferencepowerapprox}
\dfrac{\log n - \log \log n}{-\log \delta} \cdot \delta = 2 \beta.
\end{equation}
Assuming $| \log \beta | = o( \log \log n) $, it can be verified
that the solution of (\ref{interferencepowerapprox}) is
\begin{equation}\label{deltanoiselimited}
\delta= \dfrac{ 2\beta  \log \log n}{ \log n} (1+o(1)).
\end{equation}
With this value of $\delta$, the number of active links is obtained
from (\ref{numberofactivephase2}) as
\begin{equation}\label{noactivelinksnoiselimited}
k_2 = \left \lfloor \dfrac{\log n}{ \log \log n} (1+o(1)) \right
\rfloor.
\end{equation}

As mentioned before, we should show that the selected values of
$\delta$ and $\Delta$ for DTBLAS, yields the network to operate in
the noise-limited regime. The following theorem addresses this
issue.

\begin{theorem}\label{concentration3}
For the values of $\Delta$ and $\delta$ given in
\eqref{secondcondition} and \eqref{deltanoiselimited}, respectively,
the interference of active links a.a.s. satisfy
\begin{equation}
\left | I_i - \beta \right | \to 0, \quad \forall i \in \mathcal{A}.
\end{equation}
\end{theorem}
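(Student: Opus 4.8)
The plan is to prove the statement in two stages: first, identify the conditional mean of each interference term $I_i$ and show it equals $\beta$ up to a vanishing error; second, show that every $I_i$ stays within $o(1)$ of this mean \emph{simultaneously} over all active links, and combine the two by the triangle inequality. The crucial preliminary observation is that, conditioned on the realized active set $\mathcal{A}=\mathcal{A}_2$ (equivalently, on the edge set $\boldsymbol{E}$ of the graph $G$ defined before \eqref{rgraphprob}, which determines $\mathcal{A}_2$ as its maximum clique), the interference becomes a sum of \emph{independent bounded} summands. Indeed, since the $g_{ji}$ are independent and the event $e_{ij}=1$ factorizes into $\{g_{ij}\le\delta\}$ and $\{g_{ji}\le\delta\}$, conditioning on $\boldsymbol{E}$ leaves the values $\{g_{ji}:i,j\in\mathcal{A}_2\}$ independent, each distributed as an exponential truncated to $[0,\delta]$. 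Hence for a fixed $i\in\mathcal{A}$ the quantity $I_i=\sum_{j\in\mathcal{A}_2,\,j\ne i}g_{ji}$ is, conditionally, a sum of $k_2-1$ i.i.d.\ variables supported on $[0,\delta]$ with mean $\hat{\mu}$ and variance $\hat{\sigma}^2$ from \eqref{conditionalmean}--\eqref{conditionalvariance}. Its mean is $\textrm{E}(I_i)=(k_2-1)\hat{\mu}$, and combining the design relation \eqref{equivalentcondition2}, the expansion $\hat{\mu}=\delta/2+O(\delta^2)$ (valid since $\delta\to0$), and the explicit values $\delta=\frac{2\beta\log\log n}{\log n}(1+o(1))$ and $k_2=\lfloor\frac{\log n}{\log\log n}(1+o(1))\rfloor$ from \eqref{deltanoiselimited}--\eqref{noactivelinksnoiselimited} gives $\textrm{E}(I_i)=\tfrac{k_2\delta}{2}(1+o(1))=\beta(1+o(1))=\beta+o(1)$, where the last step uses $\beta=O(1)$.

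By the triangle inequality it then suffices to prove the uniform deviation bound $\max_{i\in\mathcal{A}}|I_i-\textrm{E}(I_i)|\to0$ a.a.s. The point I would emphasize is that a second-moment argument is too weak here. Since $g_{ji}\le\delta$ with $\delta\to0$, one computes $\hat{\sigma}^2=\frac{1}{12}\delta^2+O(\delta^3)$, so that $\textrm{Var}(I_i)=(k_2-1)\hat{\sigma}^2$ is of order $1/k_2$; Chebyshev therefore yields only $\textrm{P}(|I_i-\textrm{E}(I_i)|>t)=O\!\left(\frac{1}{k_2t^2}\right)$, and a union bound over the $k_2$ active links gives $O(1/t^2)$, which does not vanish for any fixed $t$, let alone for $t\to0$. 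The remedy is to exploit the boundedness of the summands through an exponential tail bound instead.

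Accordingly, I would apply Hoeffding's inequality to the conditionally i.i.d.\ sum $I_i$ whose terms lie in $[0,\delta]$: for any $t>0$, $\textrm{P}(|I_i-\textrm{E}(I_i)|>t)\le 2\exp\!\big(-2t^2/((k_2-1)\delta^2)\big)$. Using $\delta=\frac{2\beta}{k_2}(1+o(1))$, the exponent is of order $-t^2k_2/\beta^2$, so a union bound over the at most $k_2$ active links gives $\textrm{P}\!\left(\max_{i}|I_i-\textrm{E}(I_i)|>t\right)\le 2k_2\exp(-c\,t^2k_2)$ for some constant $c>0$. Because $k_2\to\infty$, the quantity $\sqrt{(\log k_2)/k_2}$ tends to $0$, so there exists a sequence $t_n\to0$ with $t_n=\omega\!\big(\sqrt{(\log k_2)/k_2}\big)$; for such $t_n$ one has $t_n^2k_2=\omega(\log k_2)$, whence $2k_2\exp(-c\,t_n^2k_2)\to0$. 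Thus a.a.s.\ $\max_{i\in\mathcal{A}}|I_i-\textrm{E}(I_i)|<t_n\to0$, and combined with $|\textrm{E}(I_i)-\beta|=o(1)$ from the first step this yields $\max_{i\in\mathcal{A}}|I_i-\beta|\to0$, as claimed. (Since $k_2=\Theta(\log n/\log\log n)$, the admissible $t_n$ is of the order $\frac{\log\log n}{\sqrt{\log n}}$, which is simultaneously $\omega\!\big(\sqrt{(\log k_2)/k_2}\big)$ and $o(1)$.)

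The step I expect to be the main obstacle is precisely the passage from per-link concentration to the uniform statement over all active links. It is what forces the use of an exponential (Hoeffding- or Bernstein-type) bound rather than Chebyshev, and it requires verifying that the deviation threshold $t_n$ can be driven to zero while still overcoming the $\log k_2\asymp\log\log n$ penalty incurred by the union bound. This succeeds only because the number of active links $k_2$ grows like $\log n/\log\log n$, which makes the window between the union-bound floor $\sqrt{(\log k_2)/k_2}$ and a constant nonempty; the conditioning-on-$\boldsymbol{E}$ reduction that restores exact independence of the summands is the other ingredient that must be handled carefully.
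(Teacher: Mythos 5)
Your proof is correct, and it reaches the same quantitative deviation window (of order $\log\log n/\sqrt{\log n}$) that the paper asserts. The route differs in the tail bound you use. The paper's proof is only a sketch: it invokes the large-deviations refinement of the central limit theorem (Theorem~\ref{clt}, from Petrov) and points to the calculations in Lemmas~\ref{concentration1} and~\ref{concentration2}, i.e.\ it normalizes $I_i$, verifies Cram\'er's condition and $y=O(m^{1/6})$, extracts a Gaussian tail via $1-\Phi(y)<e^{-y^2/2}/y$ together with the correction terms of \eqref{clteq}, and then union-bounds over the $k_2$ active links. You keep the same skeleton --- conditional independence of the truncated cross gains, identification of the mean $(k_2-1)\hat{\mu}=\beta+o(1)$, a per-link exponential tail, and a union bound --- but replace the Petrov machinery with Hoeffding's inequality, which applies directly because after conditioning on the edge set $\boldsymbol{E}$ the summands are i.i.d.\ and bounded in $[0,\delta]$. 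This buys a genuinely non-asymptotic bound with no moment-generating-function or expansion-validity checks, and it is the natural tool precisely in this noise-limited setting where $\delta\to 0$ forces the summands to be uniformly small; the paper's CLT approach has the advantage of uniformity with its earlier concentration lemmas (where, in Lemma~\ref{concentration1}, the summands are unbounded exponentials and Hoeffding would not apply). Two things you do that the paper leaves implicit and that strengthen the argument: the explicit justification that conditioning on $\boldsymbol{E}$ (hence on $\mathcal{A}_2$ and $k_2$) restores exact independence of the relevant $g_{ji}$ as truncated exponentials, and the explicit demonstration that a plain Chebyshev-plus-union-bound argument fails because $\mathrm{Var}(I_i)=\Theta(1/k_2)$ only yields an $O(1/t^2)$ bound after the union over links.
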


\begin{proof}
By using the central limit theorem it can be shown that
\begin{equation}
\left| I_i - \beta \right| < \dfrac{\beta \log \log n}{\sqrt{\log
n}}, \quad \forall i \in \mathcal{A},
\end{equation}
which readily yields the desired result. Since the calculations are
similar to those in the proof of Lemmas \ref{concentration1} and
\ref{concentration2}, we omit them for brevity.
\end{proof}

\begin{lemma}
Let $T_{_{NL}}$ denote the throughput achieved by DTBLAS in the
noise-limited regime described above. Then, almost surely we have
\begin{equation}
\log \left( 1 + \dfrac{\Delta_0}{1/\rho + \beta} \right) \leq
\lim_{n \to \infty} \dfrac{\log \log n}{\log n}T_{_{NL}} \leq  \log
\left( 1 + \dfrac{\Delta_0+1}{1/\rho + \beta} \right).
\end{equation}
\end{lemma}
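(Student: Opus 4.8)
The plan is to estimate $T_{_{NL}} = \sum_{i \in \mathcal{A}} r_i$ by controlling each active rate $r_i = \log(1 + g_{ii}/(1/\rho + I_i))$ through the two facts already established about active links: their direct gains satisfy $g_{ii} > \Delta_0$ by phase (i) of DTBLAS (recall $\Delta = \Delta_0$ from \eqref{secondcondition}), and their interferences satisfy $I_i \to \beta$ a.a.s. by Theorem~\ref{concentration3}. Combining these with the count $k_2 = \lfloor \frac{\log n}{\log \log n}(1+o(1)) \rfloor$ from \eqref{noactivelinksnoiselimited}, so that $\frac{\log \log n}{\log n}\,k_2 \to 1$, the lemma reduces to pinning the per-link rate between $\log(1 + \Delta_0/(1/\rho + \beta))$ and $\log(1 + (\Delta_0+1)/(1/\rho+\beta))$.

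For the lower bound, I would combine $g_{ii} > \Delta_0$ with the uniform estimate $|I_i - \beta| < \frac{\beta \log \log n}{\sqrt{\log n}} = o(1)$ from the proof of Theorem~\ref{concentration3} to write, for every active link,
$$r_i > \log\left(1 + \frac{\Delta_0}{1/\rho + \beta}\right)(1 + o(1)),$$
with the $o(1)$ uniform in $i$. Summing over the $k_2$ active links and multiplying by $\frac{\log \log n}{\log n}$ gives the left inequality in the limit.

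For the upper bound the key additional ingredient is the average direct gain. Because the clique chosen in phase (ii) depends only on the cross gains, which are independent of the direct gains, the direct gains of the active links are, conditionally on $k_2$, i.i.d. samples of an exponential random variable truncated to $(\Delta_0, \infty)$. By the memoryless property this truncated variable equals $\Delta_0$ plus a unit-mean exponential, hence has mean $\Delta_0 + 1$ and unit variance; since $k_2 \to \infty$, Chebyshev's inequality yields $\frac{1}{k_2}\sum_{i \in \mathcal{A}} g_{ii} \to \Delta_0 + 1$ a.a.s. I would then bound each rate from above by lower-bounding the denominator via $I_i \geq \beta - |I_i - \beta|$ and apply Jensen's inequality to the concave map $x \mapsto \log(1 + x/(1/\rho + \beta - o(1)))$:
$$\frac{1}{k_2}\sum_{i \in \mathcal{A}} r_i \leq \log\left(1 + \frac{\frac{1}{k_2}\sum_{i \in \mathcal{A}} g_{ii}}{1/\rho + \beta - o(1)}\right) \to \log\left(1 + \frac{\Delta_0+1}{1/\rho + \beta}\right).$$
Multiplying through by $\frac{\log \log n}{\log n}\,k_2 \to 1$ gives the right inequality.

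The main obstacle I anticipate is the rigorous justification that the average direct gain concentrates at $\Delta_0 + 1$ rather than at a value distorted by the phase-(ii) selection; this rests on the independence of the cross and direct channel coefficients, which decouples the clique-formation step from the truncated-exponential statistics of the $g_{ii}$, after which the finite variance and $k_2 \to \infty$ make the law of large numbers (via Chebyshev) routine. The remaining technicalities---the uniformity of the $o(1)$ interference error across all active links and the interchange of the limit with the floor in $k_2$---are handled exactly as in the proofs of Lemmas~\ref{concentration1} and \ref{concentration2}.
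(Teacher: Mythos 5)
Your proposal is correct and follows essentially the same route as the paper: the lower bound via $g_{ii}>\Delta_0$ summed over the $k_2\sim \log n/\log\log n$ active links, and the upper bound via Jensen's inequality together with the law of large numbers applied to the truncated exponential gains with conditional mean $\Delta_0+1$. Your added remarks on the independence of direct and cross gains (justifying that the phase-(ii) selection does not bias the $g_{ii}$) and on the uniformity of the interference error make explicit two points the paper leaves implicit, but the argument is the same.
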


\begin{proof}
According to Theorem~\ref{concentration3}, the throughput is
obtained as
\begin{equation}\label{throughputnoiselimited}
T_{_{NL}} = \sum_{i=1}^{k_2} \log \left(1 + \dfrac{g_{ii}}{1/\rho +
\beta} \right).
\end{equation}
Due to the fact that $g_{ii}>\Delta_0$, we have
\begin{equation}
T_{_{NL}} \geq k_2 \log \left( 1 + \dfrac{\Delta_0}{1/\rho + \beta}
\right).
\end{equation}
The left-hand-side inequality in the lemma is readily obtained by
using this inequality and the value of $k_2$ from
\eqref{noactivelinksnoiselimited}. For the right-hand-side
inequality, by utilizing the Jensen's inequality in
\eqref{throughputnoiselimited}, we obtain
\begin{equation}\label{throughputuppernl}
T_{_{NL}} \leq k_2 \log \left( 1 +
\dfrac{\frac{1}{k_2}\sum_{i=1}^{k_2} g_{ii}}{1/\rho + \beta}
\right).
\end{equation}
According to the law of large numbers and due to the fact that
$g_{ii} > \Delta_0$, we have
\begin{equation}\label{lln}
\frac{1}{k_2}\sum_{i=1}^{k_2} g_{ii} \to \textrm{E} (g_{ii} |g_{ii}
> \Delta_0 ) = 1 + \Delta_0.
\end{equation}
The result is obtained by using \eqref{throughputuppernl},
\eqref{lln}, and the value of $k_2$ from
\eqref{noactivelinksnoiselimited}.
\end{proof}

It is observed that the price for operating in the noise-limited
regime is a decrease in the throughput by a multiplicative factor of
$\log \log n$.

\section{Conclusion}\label{conclusion}

In this paper, wireless networks in Rayleigh fading environments are
studied in terms of their achievable throughput. It is assumed that
each link is either active and transmits with power $P$ and rate
$\lambda$, or remains silent. The objective is to maximize the
network throughput or equivalently the number of active links.
First, an upper bound is derived that shows the throughput and the
number of active links scale at most like $\log n$ and
$\frac{1}{\lambda} \log n$, respectively. To obtain lower bounds, we
propose two LASs (TBLAS and DTBLAS) and prove that both of them
a.a.s. yield feasible solutions for the throughput maximization
problem. In TBLAS, the activeness of each link is solely determined
by the quality of its direct channel. TBLAS, which can be
implemented in a decentralized fashion, performs very close to the
upper bound for small values of $\lambda$. However, its performance
falls below the upper bound when $\lambda$ grows large. In DTBLAS,
the mutual interference of the links are also taken into account
when choosing the active links. It is demonstrated that DTBLAS not
only performs close to the upper bound for $\lambda \to 0$, but its
performance meets the upper bound when $\lambda \to \infty$. The
above discussions take place in an interference-limited regime in
which the transmission power $P$ does not affect the transmission
rate $\lambda$. However, we show that by a proper choose of the
DTBLAS parameters, the rate-constrained network can also operate in
a noise-limited regime; this feature of the DTBLAS comes at the
price of decreasing the network throughput by a multiplicative
factor of $\log \log n$.

\appendices

\section{Proof of Lemma
\ref{concentration2}}\label{proofconcentration2}

The proof is based on the same arguments as in the proof of Lemma
\ref{concentration1}. Thus, here we just highlight the differences.

Let us define $\bar{\gamma}$ as
\begin{equation}
\bar{\gamma} =- \dfrac{(1-\alpha')\log(1-e^{-\delta})}{ \alpha'
\hat{\mu}}.
\end{equation}
Similar to the proof of Lemma~\ref{concentration1}, it is enough to
show that a.a.s.
\begin{equation}\label{equivalentbound2}
x_-' < \gamma_i < x_+',
\end{equation}
where
\begin{equation}\label{cdfparameter2}
x_{\pm}'=\bar{\gamma} \left(1 \pm c' \sqrt{\dfrac{\log \log n}{ \log
n}}(1+o ( 1 )) \right),
\end{equation}
with  $c' = c/ \bar{\gamma}$. We only prove the left side inequality
in \eqref{equivalentbound2}; the other inequality can be proved in a
similar manner.

Let $\mathcal{L}_3$ denote the event that
\begin{equation}\label{SINRlowerbound2}
\gamma_i > x_-',  \quad \forall  i \in \mathcal{A},
\end{equation}
In the following, we show that $\textrm{P}(\mathcal{L}_3) \to 1$ for
some $c'>0$.

Note that with $\Delta=(1-\alpha') \log n$, the parameter $k_+'$ in
Corollary \ref{concentrationactive2} is obtained as
\begin{eqnarray}
k_+' & = & \kappa_{_{DTBLAS}} \log n - a \log \log n\\
& < & \kappa_{_{DTBLAS}} \log n,\label{upperboundk+}
\end{eqnarray}
where $\kappa_{_{DTBLAS}}$ is given in \eqref{noactivelinksc} and
$a>0$ is a constant. Denoting the cdf of $\gamma_i$ conditioned on
$|\mathcal{A}|=k$ by $F_{\gamma}(x,k)$, we have
\begin{eqnarray}
\textrm{P} (\mathcal{L}_3) & \stackrel{(a)}{>} & \left(1-
F_{\gamma}(x_-',k_+') \right)^{k_+'} \textrm{P} \left( k_-' \leq
|\mathcal{A}| \leq k_+' \right)\\
& \stackrel{(b)}{\approx} & \left(1- F_{\gamma}(x_-',k_+') \right)^{k_+'}\\
& \stackrel{(c)}{>} & \left(1- F_{\gamma}(x_-',\kappa_{_{DTBLAS}}
\log n) \right)^{ \kappa_{_{DTBLAS}} \log n},\label{khastehshodam2}
\end{eqnarray}
where (a) is obtained in the same manner as \eqref{khastehshodam},
(b) results from Corollary \ref{concentrationactive2}, and (c) is
due to  \eqref{upperboundk+} and the fact that
$(1-F_{\gamma}(x,k))^k $ is a decreasing function of $k$. To show
that the RHS of \eqref{khastehshodam2} tends to one, we upper bound
$ \kappa_{_{DTBLAS}} \log n F_{\gamma}(x_-',\kappa_{_{DTBLAS}} \log
n) $ and show that it approaches zero.

Similar to the derivation of \eqref{upperFGamma}, it can be shown
that
\begin{equation}\label{upperFGamma2}
F_{\gamma}(x_-',\kappa_{_{DTBLAS}} \log n) <  \textrm{P}
\left(\sum_{\substack{j=1\\ j \neq i}} ^ {\kappa_{_{DTBLAS}} \log n
}g_{ji} \geq \dfrac{\Delta}{x_-'} -\frac{1}{\rho} \right).
\end{equation}
Let us define $Y_j = g_{ji}- \hat{\mu}$, where $\hat{\mu}$ is
obtained from (\ref{conditionalmean}). Random variable $Y_j$ has the
variance $\nu=\hat{\sigma}^2$, where $\hat{\sigma}^2$ is given in
\eqref{conditionalvariance}. By defining $Z=\dfrac{1}{\sqrt{\nu
(\kappa_{_{DTBLAS}} \log n -1)}} \sum_{\substack{j=1\\ j \neq i}} ^
{\kappa_{_{DTBLAS}} \log n } Y_j$, \eqref{upperFGamma2} can be
reformulated as
\begin{equation}\label{upperFGamma2reformulated}
F_{\gamma}(x_-',\kappa_{_{DTBLAS}} \log n) < 1 - \textrm{P} (Z \leq
y),
\end{equation}
where
\begin{equation}\label{defy2}
y  =  \dfrac{\frac{\Delta}{x_-'} -\frac{1}{\rho} -
(\kappa_{_{DTBLAS}} \log n-1)\hat{\mu}}{\sqrt{(\kappa_{_{DTBLAS}}
\log n-1)\hat{\sigma}^2}}.
\end{equation}
By substituting $\Delta=(1-\alpha') \log n$ and the value of $x_-'$
from \eqref{cdfparameter2} into \eqref{defy2}, we obtain
\begin{eqnarray}
y & = &  c' \sqrt{\dfrac{\kappa_{_{DTBLAS}}
\hat{\mu}^2}{\hat{\sigma}^2}}\sqrt{\log \log n} \left( 1 + O \left(
\dfrac{1}{\sqrt{\log n \log \log n }} \right)
\right).\label{valuey2}
\end{eqnarray}

It is straightforward to show that the moment-generating function of
$Y_j$ exists around zero. Hence, the Cram\'er's condition is
satisfied. Also, by choosing $m = \kappa_{_{DTBLAS}} \log n -1$, the
condition $y = O (m^{1/6})$ is satisfied, as well. As a result,
Theorem~\ref{clt} can be utilized to calculate the
RHS\eqref{upperFGamma2reformulated} as
\begin{equation}\label{rhs3}
1 - \textrm{P} (Z \leq y) = [1-\Phi(y)]\exp\left(\dfrac{\theta_3
y^3}{6
 \sqrt{\nu^3 \kappa_{_{DTBLAS}} \log n}}\right) +O\left(
\dfrac{e^{-\frac{y^2}{2}}}{\sqrt{\kappa_{_{DTBLAS}} \log n}} \right)
\end{equation}

By combining \eqref{upperFGamma2reformulated}, \eqref{rhs3}, and
\eqref{valuey2}, and noting that $\theta_3$ is a constant,
$y^3=o(\sqrt{\kappa_{_{DTBLAS}} \log n})$, and $1-\Phi(y) <
\frac{e^{-y^2/2}}{y}$, we conclude that
\begin{equation}
\kappa_{_{DTBLAS}} \log n F_{\gamma}(x,\kappa_{_{DTBLAS}} \log n) <
 \kappa_{_{DTBLAS}} \log n \dfrac{e^{-\frac{y^2}{2}}}{y}
\end{equation}
\begin{equation}\nonumber
=  \exp\left( (1-\frac{c'^2 \kappa_{_{DTBLAS}}
\hat{\mu}^2}{2\hat{\sigma}^2}) \log \log n  +O(\log \log \log n)
\right)
\end{equation}
It is clear that if $c'$ is chosen large enough, the above upper
bound approaches zero as $n \to \infty$. This completes the proof.

\section{Proof of Lemma \ref{cliquenumber2}}

The proof is based on the standard second moment method.

\subsection{Preliminary Calculations}

Assume $Y_s$ is the number of cliques of size $s$ in $G$. Let us
denote its mean and variance by $\mu_s$ and $\sigma_s^2$,
respectively. According to \cite{randomgraphs}, we have
\begin{eqnarray}
\label{exprcliques} \mu_s & = & \binom{m}{s} p ^{\binom{s}{2}},\\
\label{ratiorclique} \dfrac{\sigma_s^2}{\mu_s^2} & = &
\sum_{\ell=2}^s \dfrac{\binom{s}{\ell}
\binom{m-s}{s-\ell}}{\binom{m}{s}} (b^{\binom{\ell}{2}}-1),
\end{eqnarray}
where $b=1/p$. By applying the Stirling's approximation to
(\ref{exprcliques}), we obtain
\begin{eqnarray}\label{exprcliqueup1}
\mu_s & = & \dfrac{m^{m+\frac{1}{2}}}{ \sqrt{2 \pi}
s^{s+\frac{1}{2}} (m-s)^{m-s+\frac{1}{2}}   } p ^
{\frac{s(s-1)}{2}}\\
& \leq & \dfrac{1}{ \left( \frac{s}{m}  \right) ^ s   \left(
1-\frac{s}{m}  \right)^m } p ^ {\frac{s(s-1)}{2}}
\end{eqnarray}
For any $\epsilon>0$, the inequality $1-x \geq e^{-(1+\epsilon)x}$
holds for sufficiently small values of $x$. Since we are interested
in small values of $s/m$, from this inequality and
(\ref{exprcliqueup1}), we obtain
\begin{equation}\label{exprcliqueup2}
\mu_s \leq e^ {s ( \log m - \log s +(1+\epsilon) - \frac{s-1}{2}
\log b )}
\end{equation}
Equation (\ref{ratiorclique}) is readily converted to the following
inequality
\begin{equation}\label{ratiorcliqueup1}
\dfrac{\sigma_s^2}{\mu_s^2}  \leq  \sum_{\ell=2}^s F_{\ell},
\end{equation}
where
\begin{equation}
F_{\ell}= \dfrac{\binom{s}{\ell} \binom{m-s}{s-\ell}}{\binom{m}{s}}
b^{\binom{\ell}{2}}.
\end{equation}
By using the definition of the binomial coefficients, we obtain
\begin{eqnarray}
F_{\ell} & \leq & 2^s \cdot \frac{(m-s)!}{m!} \cdot
\frac{(m-s)!}{(m-2s+\ell)!} \cdot \frac{s!}{(s-\ell)!} \cdot
b^{\frac{\ell (\ell -1)}{2}}\\
& \leq & \dfrac{2^s \cdot (m-s)^{s-\ell} \cdot s^{\ell} }{ (m-s)^{s}
} \cdot b^{\frac{\ell (\ell -1)}{2}}\\
& = & 2^s \cdot \left( \frac{m}{s} -1 \right) ^ {- \ell} \cdot
b^{\frac{\ell (\ell -1)}{2}}
\end{eqnarray}
Noting that $\frac{m}{s} \gg 1$, the above inequality can be
approximately written as
\begin{equation}\label{Flup}
F_{\ell} \leq 2^s \cdot \left( \frac{s}{m} \right) ^ { \ell} \cdot
b^{\frac{\ell (\ell -1)}{2}}.
\end{equation}
Using (\ref{ratiorcliqueup1}) and (\ref{Flup}), we obtain
\begin{equation}\label{ratiorcliqueup2}
\dfrac{\sigma_s^2}{\mu_s^2}  \leq  \sum_{\ell=2}^s e^{g(\ell)},
\end{equation}
where
\begin{equation}
g(\ell)=s \log 2 + \ell ( \log s - \log m + \frac{\ell}{2} \log b -
\frac{1}{2} \log b )
\end{equation}
is a quadratic convex function with a minimum at $\ell _0 =
\frac{\log m}{\log b} - \frac{\log s}{\log b} + \frac{1}{2} $.
Define
\begin{equation}
s_0=2 \log _b m - 2 \log _b \log _b m  - 2\log _b 2.
\end{equation}
It is easy to show that if $s>s_0$, then $g(s)>g(2)$. Hence,
(\ref{ratiorcliqueup2}) can be simplified as
\begin{equation}\label{ratiorcliqueup3}
\dfrac{\sigma_s^2}{\mu_s^2}  \leq  e^{\log s + g(s)}.
\end{equation}

\subsection{Proof}

According to the Markov's inequality, we have
\begin{equation}\label{markov}
\textrm{P} \left\{ Y_s \geq 1  \right\} \leq \mu_s.
\end{equation}
For a fixed $\epsilon>0$, define
\begin{equation}
s_1=2 \log _b m - 2 \log _b \log _b m +1 +2\log _b (e/2) +
\dfrac{\epsilon}{\log b}.
\end{equation}
Using (\ref{exprcliqueup2}), it is easy to verify that for $s \geq
s_1$, we have $\mu_s \to 0$ as $ m \to \infty $. Hence, from
(\ref{markov}), we conclude that
\begin{equation}
\textrm{P} \left\{ Y_{s} \geq 1  \right\}  \to 0, \quad \textrm{for}
\quad s \geq s_1
\end{equation}
as $m \to \infty$. This means a.a.s. the clique number of $G$ is
less than $s_1$, i.e., we have the following upper bound on
$\textrm{cl}(G)$
\begin{equation}\label{upperboundcliquenumber}
\textrm{cl}(G) < s_1 \quad a.a.s.
\end{equation}
According to the Chebyshev's inequality, we have
\begin{equation}\label{chebyshev}
\textrm{P} \left\{ Y_s = 0  \right\} \leq
\dfrac{\sigma_s^2}{\mu_s^2}.
\end{equation}
For a fixed $\epsilon>0$, define
\begin{equation}
s_2=2 \log _b m - 2 \log _b \log _b m +1 - 4\log _b 2 -
\dfrac{\epsilon}{\log b}.
\end{equation}
Using (\ref{ratiorcliqueup3}), it is easy to verify that for $s \leq
s_2$, we have $\sigma_s^2 / \mu_s^2 \to 0$ as $ m \to \infty$.
Hence, from (\ref{chebyshev}), we conclude that
\begin{equation}
\textrm{P} \left\{ Y_{s} = 0  \right\}  \to 0, \quad \textrm{for}
\quad s \leq s_2
\end{equation}
as $m \to \infty$. This means a.a.s. the clique number of $G$ is not
less than $\lfloor s_2 \rfloor$, i.e., we have the following lower
bound on $\textrm{cl}(G)$
\begin{equation}\label{lowerboundcliquenumber}
\textrm{cl}(G) \geq \lfloor s_2 \rfloor  \quad a.a.s.
\end{equation}
For sufficiently small $\epsilon$, the difference between the upper
bound $s_1$ and the lower bound $s_2$ is less than one. Hence, from
(\ref{upperboundcliquenumber}) and (\ref{lowerboundcliquenumber}) we
can conclude that
\begin{equation}
\lfloor s_2 \rfloor \leq \textrm{cl}(G) \leq \lfloor s_2 \rfloor +1
\quad a.a.s.
\end{equation}

\section*{Acknowledgment}

The authors would like to thank T. Luczak and S. Oveis Gharan for
helpful discussions on the proof of Lemma~\ref{cliquenumber2}.


\end{document}